\def\BState{\State\hskip-\ALG@thistlm}
\begin{document}

\title{Layer Based Partition for Matrix Multiplication on\\ Heterogeneous Processor Platforms}




\author{Yang~Liu         \and
        Li~Shi          \and
        Junwei~Zhang    \and
        Thomas~G.~Robertazzi
}


\institute{Yang Liu \at
              \email{yangliu89415@gmail.com}           
           \and
           Li Shi \at
              \email{lishi.pub@gmail.com}
           \and
           Junwei Zhang \at
              \email{junwei.zhang@stonybrook.edu}
           \and
           Thomas~G.~Robertazzi \at
              ECE Department, Stony Brook University \\
              \email{Thomas.Robertazzi@stonybrook.edu }
}


\maketitle

\begin{abstract}
 While many approaches have been proposed to analyze the problem of matrix multiplication parallel computing, few of them address the problem on heterogeneous processor platforms. It still remains an open question on heterogeneous processor platforms to find the optimal schedule that balances the load within the heterogeneous processor set while minimizing the amount of communication. A great many studies are based on rectangular partition, whereas the optimality of rectangular partition as the basis has not been well justified.

In this paper, we propose a new method that schedules matrix multiplication on heterogeneous processor platforms with the mixed co-design goal of minimizing the total communication volume and the multiplication completion time. We first present the schema of our \emph{layer based partition} (LBP) method. Subsequently, we demonstrate that our approach guarantees minimal communication volume, which is smaller than what rectangular partition can reach. We further analyze the problem of minimizing the task completion time, with network topologies taken into account. We solve this problem in both single-neighbor network case and multi-neighbor network case. In single-neighbor network cases, we propose an equality based method to solve LBP, and simulation shows that the total communication volume is reduced by $75\%$ from the lower bound of rectangular partition. In multi-neighbor network cases, we formulate LBP as a Mixed Integer Programming problem, and reduce the total communication volume by $81\%$ through simulation. To summarize, this is a promising perspective of tackling matrix multiplication problems on heterogeneous processor platforms.

\keywords{Matrix Multiplication \and Heterogeneous processing \and Optimization \and Load balancing \and Communication overhead}
\end{abstract}

\section{Introduction}
\label{sec:introduction}
Matrix multiplication has been widely performed in a variety of areas. For example, in image processing, a multiplication of projection matrix and system coefficient matrix is used to reconstruct the original images from the projections \cite{GLZeng}. In signal processing, the discrete Fourier transform of a signal is calculated by multiplying the N-by-N DFT matrix with the signal matrix \cite{RGLyons}. Many other applications include cryptography, computer graphics, economics, physics, electronics, etc, which all involve large scale data processing.

On homogeneous processor platforms, the problem of scheduling matrix multiplication load for parallel processing has been extensively studied, such as Canon's algorithm \cite{Canon}, SUMMA \cite{Geijn} and Solomonik's 2.5D algorithm \cite{solomonik}, etc. However, these approaches generally ignore the heterogeneity of processors and links, as well as network topologies. Thus, in distributed computing and heterogeneous platforms, those algorithms fail to apply because they can't guarantee load balance in those scenarios. To minimize the task completion time without considering the heterogeneity of the system is simply impossible. Therefore, for distributed computing and heterogeneous systems, additional factors need to be taken into account, such as heterogeneous processor speeds, heterogeneous link speeds, network topologies, distributed storage, etc.

To schedule matrix multiplication on heterogeneous processor platforms, researchers usually consider the following questions.
\vspace{-0.8mm}
\begin{enumerate}
  \item How to allocate the computing load to minimize the total communication volume?
  \item How to minimize the task completion time?
\end{enumerate}
\vspace{-0.8mm}

To optimize the total communication volume, many previous research works apply \emph{rectangular partition} on the result matrix \cite{AKalinov}-\cite{Beaumont2}. Rectangular partition, which adopts the well-known \emph{divide and conquer} strategy, divides the result matrix into multiple sub-rectangles, and assigns each sub-rectangle's computing load to different processors respectively. However, approaches based on \emph{rectangular partition} generally have the following drawbacks:

\vspace{-0.8mm}
\begin{enumerate}
  \item The restriction of each division's shape being rectangular brings difficulty to find the optimal partition that minimizes the total communication volume.
  \item The best communication volume of \emph{rectangular partition} may not be globally optimal.
\end{enumerate}
\vspace{-0.8mm}

There have been perspectives using \emph{non-rectangular partition}\cite{DeFlumere1}\cite{DeFlumere2}. However, these approaches only allows one division of the partition to be in random shape, while keeping the majority of the rest still in the shape of a rectangle. Thus, these approaches do not completely resolve the problem brought by the \emph{rectangular partition}.

Motivated by this, we proposes another approach called \emph{layer based partition} (LBP). Rather than assigning one rectangular sub-matrix of the final result matrix to a specific processor to process, our algorithm assigns each processor with one layer. Each layer is of the same shape with the result matrix. We will show that this method guarantees the optimality of the total communication volume.

We further study the problem of scheduling matrix multiplication on heterogeneous processor networks with the goal of minimizing the multiplication completion time. While several approaches have been proposed \cite{Lastovetsky1}-\cite{Nagamochi}, none of them have addressed problem in the context of a specific network topology, like a heterogeneous mesh. Besides, most previous works consider the problem in the real number domain such that it is allowed for a processor to get $0.3$ rows, for instance. Comparatively, we consider the problem in the integer domain which is more applicable in real practice. In our paper, we study the problem in two specific networks: star network and mesh, and propose different strategies to minimize their overall finishing time.

\subsection{An Example}

\begin{figure}
\centering
  \includegraphics[width=3.4in]{./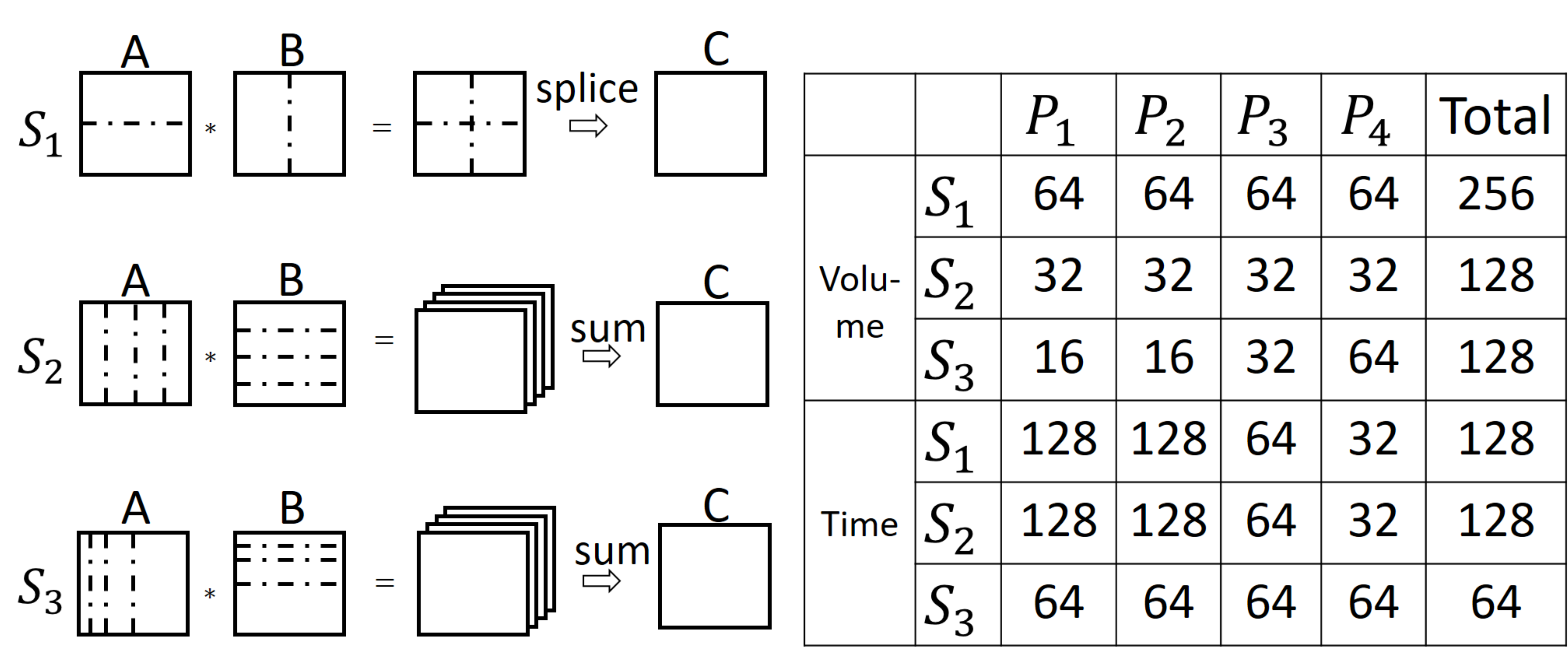}
  \caption{Examples: scheduling matrix multiplication load to 4 processors.}\label{fig:AnExample}
\end{figure}

To better illustrate this matrix multiplication scheduling problem, an example is shown in figure \ref{fig:AnExample}. Consider a task of multiplying two $8 * 8$ matrices(A and B) using four processors $P_{1}$, $P_{2}$, $P_{3}$ and $P_{4}$. Suppose the computing power of $P_{1}$, $P_{2}$, $P_{3}$ and $P_{4}$ are $1$, $1$, $2$ and $4$ respectively. How to schedule the computing load onto these four processors to optimize communication volume and multiplication finishing time? Figure \ref{fig:AnExample} provides three scheduling schemes $S_{1}$, $S_{2}$ and $S_{3}$. $S_{1}$ represents a \emph{rectangular partition} of the result matrix where each of the processor is assigned calculating one of the rectangular sub-matrices. $S_{2}$ and $S_{3}$ represent a \emph{layer based partition} scheme. Specifically, $S_{2}$ is an evenly divided scheme, in which each processor takes an equal number of rows and columns and computes one layer of the result matrix. Compared to $S_{1}$, $S_{2}$ remains unchanged in overall finishing time, but substantially reduces the total communication volume. $S_{3}$ also partitions the matrix multiplication load based on layers, but make the number of columns taken by each processor being proportional to its computing power. As a result, $S_{3}$ keeps a low communication volume like $S_{2}$, while optimizing the multiplication finishing time. Note that according to our previous assumption in the introduction part, we only consider the process of getting the distributed results, whereas aggregating the distributed results can be done asynchronously and is out of the scope of this discussion.

The reason that $S_{2}$ and $S_{3}$ has much less communication volume compared to $S_{1}$ is that entries of the matrices are sent only once in $S_{2}$ and $S_{3}$ whereas they are sent twice in $S_{1}$. For example, in $S_{1}$, the first row of matrix A is sent to both $P_{1}$ and $P_{2}$ in order to calculate the upper left sub-matrix and upper right sub-matrix respectively. In contrast, in $S_{2}$ and $S_{3}$, each entry of matrix $A$ and matrix $B$ is sent only once. Therefore, the total communication volume of $S_{2}$ and $S_{3}$ is greatly less than $S_{1}$.

We can see that when performing matrix multiplication, \emph{layer based partition} generates much less communication volume compared to \emph{rectangular partition}. With correct distribution of multiplication loads to processor, we can further optimize the multiplication completion time. However, in practice, the problem of scheduling matrix multiplication load is more sophisticated, when we consider this problem in a specific network environment, in which factors like network topology, communication mode, etc have to be taken into account.
It's even more complex, if we consider the heterogeneity of the system. This paper explores these cases in detail.

\subsection{Assumption and Constraints}
\label{sec:assumption}
In this paper we only consider acquiring the distributed results of each layer from the multiplication process, whereas the aggregation of those layers is out of the scope of this discussion. This is because we can exploit distributed storage to store those results in a distributed manner.
Since addition processes are of much lighter weight than multiplication processes, we can do asynchronous aggregation afterwards or only when necessary, rather than summing up immediately.
Therefore, we assume that we are at a decent state once all the $O(N^3)$ multiplication procedures are done. Then,
we store these multiplication results distributively in the memory or hard disk of each processor.
In the concept of \emph{rectangular partition}, that means we consider task completion once each sub-rectangle result is acquired, while the combination of these sub-rectangles are out of the scope of this discussion. In the concept of our \emph{layer based partition}, that means we regard task as completed once each layer of the result matrix has
been calculated and stored, whereas the summation process of these layers are left with asynchronous processes.
In either way, once all the distributed multiplication results are acquired, we mark that as the completion of the task.
This is an important assumption for the comparison between \emph{layer based partition}
and \emph{rectangular partition} in the latter part of this paper.
We apply this assumption when calculating the total communication volume and task completion time.

The memory limit can be a tight constraint for \emph{layer based partition}, when the size of the matrix gets enormously large and exceeds memory. For that case, we need to hold the matrices in hard disks, and batch-load the memory and do processing. After processing we write the result back to disk. With distributed storage we can store each layer of the result matrix on the hard disk of each processor.
Since these super-large matrices are usually not read intensive, we can do asynchronous sync-up, or only need to read from the disk and add up each layer when there's an infrequent read request.
All in all, it's a bit tricky for \emph{layer based partition} to handle matrices larger than memory size. But for other matrices, it's convenient and bears substantial advantages.

\subsection{Our Contributions}

Our main contributions are:

\vspace{-0.8mm}
\begin{enumerate}
  \item We analyze the disadvantages of \emph{rectangular partition.} First, we show it's essentially difficult to obtain a communication-optimal partition. Second, we show that the best communication volume achieved by \emph{rectangular partition} is not globally optimal. We propose the lower bound of matrix multiplication's communication volume, and we prove \emph{rectangular partition} consumes a total communication volume larger than the lower bound.
  \item In contrast, we propose \emph{layer based partition}(LBP) scheme. We show that our scheme is superior to \emph{rectangular partition} because: A). It is easy to obtain a communication-optimal partition. B). It can reach the lower bound of total communication volume.
  \item We study the load scheduling problem to minimize overall finishing time in star networks. We propose an equality based strategy, and apply this strategy under four communication modes respectively: SCSS, SCCS, PCCS, PCSS.
  \item We study the load scheduling problem to minimize overall finishing time in mesh networks. We formulate this problem as a mixed integer programming problem called \emph{MFT-LBP}. We propose an algorithms called \emph{PMFT-LBP} to solve it, which contains 3 phases. We also provide a heuristic to reduce time complexity.
\end{enumerate}
\vspace{-0.8mm}

One thing to notice is that the matrices discussed in this paper are dense matrices with hundreds or thousands of rows/columns, which are very
common in current parallel processing environment.

The rest of the paper is organized as follows. Section $2$ reviews the related research. Section $3$ introduces \emph{layer based partition} scheme, and analyzes its improvements over \emph{rectangular partition}. Section $4$ studies the load scheduling scheme of LBP in $single-neighbor$ network, and Section $5$ studies that in $multi-neighbor$ network. Section $6$ evaluates the performance of our algorithms through simulations. Finally, section $7$ concludes the paper.

\section{Related Work}
\label{sec:related}
A great amount of research effort has been devoted to the problem of matrix multiplication parallel/distributed processing. In this section, we categorize those most related works into the following parts:

\smallskip \noindent {\bf Approaches on homogeneous platforms.} Homogeneous platforms assume that all the computing / comunication resources and environment are identical. Matrix multiplication scheduling on homogeneous platforms have been extensively studied in \cite{Canon} - \cite{Samantray}, among which, Canon introduces the first parallel algorithm on homogeneous grids \cite{Canon}. Fox {\em et al.} extend the analysis on two-dimensional mesh and hypercubes \cite{Fox}. SUMMA overcomes the shortcomings of Cannon's and Fox's, and becomes the most widely applied parallel matrix multiplication scheme \cite{Geijn}.
Solomonik {\em et al.} \cite{solomonik} propose a method known as the `2.5D Algorithm', which can achieve asymptotically less communication than Canon's algorithm and be faster in practise.
Malysiak {\em et al.} \cite{Malysiak} present a novel model of distributing matrix multiplication within homogeneous systems with multiple hierarchies. Scheduling of sparse-dense matrix multiplication has been studied in \cite{Yzelman}\cite{Koanantakool}.

However, when the computing scale gets larger and larger, heterogeneous computing is more suitable than homogeneous computing for super large distributed and parallel processing. An example is, CPU-GPU heterogeneous processing is inevitable a trend to achieve higher computing performance \cite{Mittal}. Heterogeneous System Architecture(HSA) is another example that uses multiple processor types on the same integrated circuit to provide the best overall performance \cite{Hwu}.

\smallskip \noindent {\bf Approaches on heterogeneous platforms.} Efforts have been devoted to analyze matrix multiplication on heterogeneous platforms \cite{AKalinov}-\cite{Yang}. Some researchers try to extend those parallel processing algorithms from homogeneous platforms to heterogeneous platforms. Both Kalinov \cite{AKalinov} and Quintin {\em et al.} \cite{Quintin} investigate the scalability to modify SUMMA \cite{Geijn}, to fit into heterogeneous processor platforms. Ohtaki {\em et al.} \cite{Ohtaki} propose a scheme to apply the Strassen's algorithm on heterogeneous clusters.

However, these approaches only optimize communication volume, yet fail to consider whether multiplication completion time gets optimized as well.

In addition to approaches that try to apply homogeneous parallel algorithms, other researchers seek new perspectives. Alonso {\em et al.} \cite{Alonso} use two strategies to implement parallel solvers for dense linear algebra problems on heterogeneous clusters. Malik {\em et al.} \cite{Malik} propose a topology-aware matrix multiplication algorithm, and they base their hierarchical communication model on irregular 2D rectangular partition. Zhong {\em et al.} \cite{Zhong} utilize functional performance model to balance load on heterogeneous networks of uni-processor computers. Demmel {\em et al.} \cite{Demmel} propose a communication-efficient algorithm for all dimensions of rectangular matrices, apart from square matrices. Beaumont {\em et al.} \cite{Beaumont1} compares static, dynamic and hybrid resource allocation strategies for matrix multiplication, and analyse the benefit of introducing more static knowledge in runtime libraries.

The most utilized partition method in these approaches is \emph{rectangular partition}. There exists a significant amount of research digging into the problem of \emph{rectangular partition} on heterogeneous platforms, as discussed below.

\smallskip \noindent {\bf Rectangular partition approaches on heterogeneous platforms.} Many approaches have been proposed based on \emph{rectangular partition} of matrix, and researchers have considered the optimal \emph{rectangular partition} \cite{Lastovetsky1}\cite{Lastovetsky2}\cite{Clarke}. However, even though the optimal \emph{rectangular partition} problem attracts a great deal of attention, the optimal partition that minimizes the total communication volume remains an open question. Researchers have realized that the problem is complex. Ballard {\em et al.} \cite{Ballard} study the lower bounds of communication volume, and the difficulty of finding communication optimal \emph{rectangular partition}. Beaumont {\em et al.} \cite{Beaumont2} prove that given the area of each sub-rectangle, it is a NP-complete problem to find communication optimal \emph{rectangular partition} of a specific matrix. In regard to this, DeFlumere {\em et al.} \cite{DeFlumere1} question the optimality of \emph{rectangular partition}, and propose a perspective of \emph{non-rectangular partition}. Further, DeFlumere {\em et al.} \cite{DeFlumere2} show that this non-rectangular scheme outperforms \emph{rectangular partition} in terms of communication volume, and generalize this algorithm to a three processors' scenario. Nagamochi {\em et al.} \cite{Nagamochi} propose a recursive partitioning algorithm that dissects a rectangle into rectangles with specified areas. Beaumont {\em et al.} \cite{Beaumont3} in addition, propose a new approximation algorithm for matrix partitioning by adopting the idea of non-rectangular partition, and the recursive partitioning algorithm proposed by Nagamochi {\em et al.} Fuenschuh {\em et al.} \cite{Fuenschuh} present a polynomial time approximation algorithm that solves a soft rectangle packing problem, and derive an upper bound estimation on its approximation ratio.

In summary, researchers have begun to realize the difficulty in finding the optimal \emph{rectangular partition} that balance loads while minimize communication volume. Beaumont's work {\em et al.} \cite{Beaumont2} explicitly reveals that it is a NP-complete problem. Moreover, though alternative perspectives like \emph{non-rectangular partition} have been proposed \cite{DeFlumere1}\cite{DeFlumere2}, those perspectives keep the majority of the partitions still in the shape of rectangle, which do not completely resolve the restriction brought by geometrical shapes. In contrast, in \emph{layer based partition} scheme, we avoid the NP-complete problem and make it very easy to obtain a communication-optimal partition, which reaches the lower bound of total communication volume.

\section{Layer Based Partition(LBP)}\label{layer partition}
In this section we propose a new method - the \emph{layer based partition}(LBP) scheme to tackle matrix multiplication on heterogeneous processor platforms.

\begin{figure}[H]
\centering
  \includegraphics[width=3.5in]{./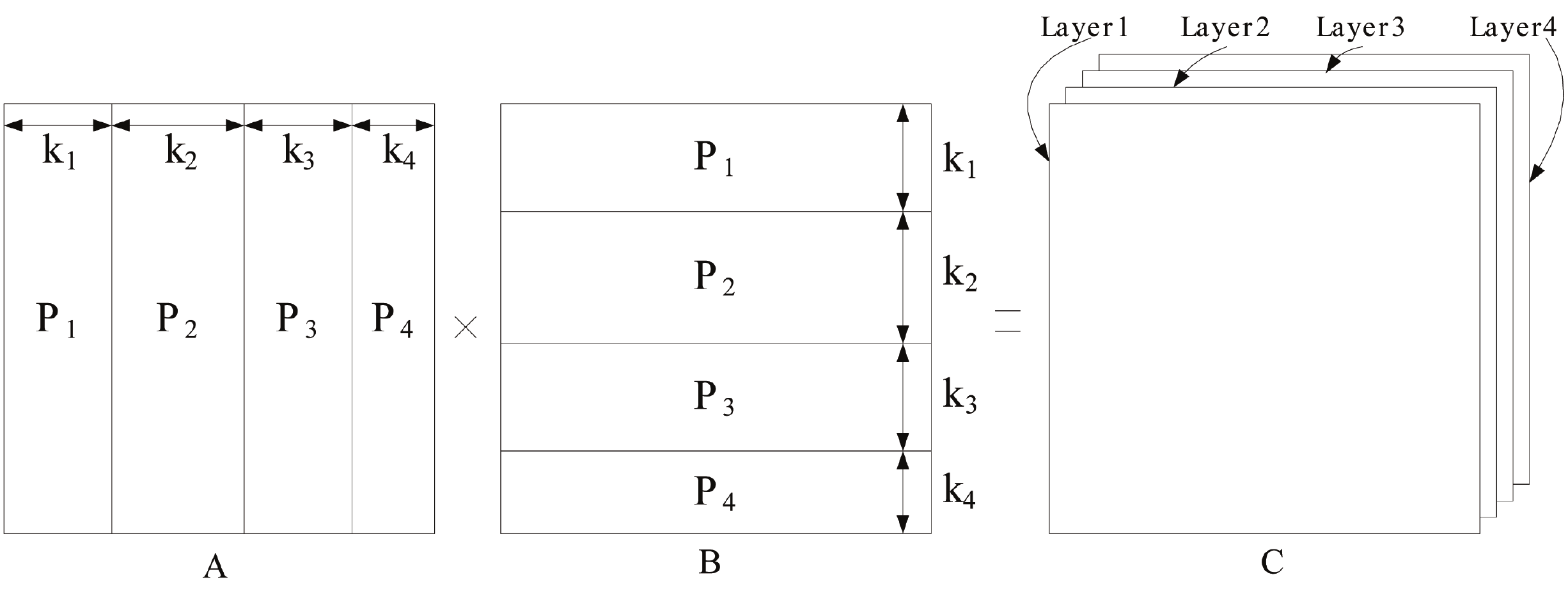}
  \caption{Layer Based Partition Pattern.}\label{fig:Layer}
\end{figure}

\subsection{Scheme Overview}
While the goal remains as conducting two $N * N$ matrices' multiplication, the approach taken by LBP is different from \emph{rectangular partition}. In LBP, each processor is responsible for calculating one layer of the output matrix. Each layer is of the same dimension of the output matrix, and the output matrix is the aggregation of all layers.

Figure \ref{fig:Layer} shows an example of this LBP scheme with four processors cooperating on two $N * N$ matrices' multiplication. Processor $P_{1}$ takes matrix $A$'s leftmost $k_{1}$ columns and $B$'s upmost $k_{1}$ rows, and then do multiplication. The result is still a $N * N$ matrix, which is actually the 1st layer of the final output matrix. The same method applies to the rest of the processors, with processor $P_{2}$ taking charge of 2nd layer, processor $P_{3}$ taking charge of 3rd layer, and processor $P_{4}$ taking charge of the last layer. The final output matrix is the sum of these four layers.

\subsection{Improvements over Rectangular Partition}
The important improvements of LBP scheme over \emph{rectangular partition} include
\begin{itemize}
\item It is much easier for LBP to obtain a communication-optimal partition. In fact, any \emph{layer based partition} is already communication-optimal. But for \emph{rectangular partition}, it is a NP-complete problem to find communication-optimal partition according to Beaumont \cite{Beaumont2}, because of the restriction that each division's shape has to be rectangular.

\item Under assumption \ref{sec:assumption}, LBP also takes less total volume of data sent by the source than \emph{rectangular partition}. LBP essentially reaches the communication lower bound, as will be proven below.
\end{itemize}

In order to ensure an equal base of comparison, we assume in the following part of this paper that, all source nodes do not take part in computation.
\begin{theorem}
\emph{(Layer Based Partition Theorem)}
\label{LayerBasedPartitionTheorem}
LBP generates the minimal total communication volume in conducting two square matrices' multiplication.
\end{theorem}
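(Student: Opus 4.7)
The plan is to bound the total communication volume incurred by LBP from above by $2N^2$ and to prove a matching universal lower bound, so that LBP is shown to be optimal.

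First, I would directly compute the volume incurred by LBP. Under any layer partition with $P_i$ receiving $k_i$ consecutive columns of $A$ and the matching $k_i$ rows of $B$ (with $\sum_i k_i = N$), processor $P_i$ receives $k_i N + k_i N = 2k_i N$ scalars from the source. The column index sets assigned to different processors are disjoint, and similarly for the row index sets, so no entry of $A$ or $B$ is duplicated across transmissions. Summing over $i$ yields exactly $2N \sum_i k_i = 2N^2$, regardless of the particular choice of $(k_1,\ldots,k_p)$.

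Second, I would derive the matching lower bound for any scheme. Because we use the standard $\Theta(N^3)$ algorithm in which each $C_{ij}=\sum_k A_{ik}B_{kj}$ requires every individual scalar product $A_{pq}B_{qj}$, every entry of $A$ and every entry of $B$ is consumed somewhere in the computation. Since source nodes do not participate in computation (as stipulated just before the theorem), each of the $N^2$ entries of $A$ and each of the $N^2$ entries of $B$ must leave the source at least once in order to reach the processor that uses it. This gives a lower bound of $2N^2$ on the source's total output volume, which exactly matches the LBP value.

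The main obstacle I expect is the objection that an entry might be forwarded between processors rather than shipped from the source, potentially shrinking the counted volume. I would address this by noting that every entry still must originate from the source and therefore be emitted at least once to reach any computing processor, so the counting is unaffected. A secondary subtlety is the reliance on the standard multiplication algorithm rather than fast variants such as Strassen's; this is already implicit in the paper's dense-matrix setting and in its earlier description of $O(N^3)$ multiplications, so I would simply invoke that modeling assumption. Once both points are made explicit, the proof collapses to the two counting arguments above.
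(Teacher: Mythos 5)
Your proposal is correct and follows essentially the same route as the paper: you compute LBP's volume as $\sum_i 2k_iN = 2N^2$ and match it against the universal lower bound of $2N^2$ obtained because every entry of $A$ and $B$ must leave the non-computing source at least once, which is exactly the paper's Lemma on the lower bound combined with its direct calculation. Your extra remarks on inter-processor forwarding and the standard $O(N^3)$ algorithm only make explicit assumptions the paper already uses implicitly.
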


To prove the theorem, we firstly present the lower bound of communication volume.

\begin{lemma}
\label{LowerBound}
For all scheduling schemes for two $N * N$ matrices' multiplication, the lower bound of communication volume is $2N^{2}$, if the source does not take part in processing.
\end{lemma}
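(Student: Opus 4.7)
The plan is to reduce the lemma to the observation that every entry of $A$ and every entry of $B$ must be emitted by the source at least once, since the source is the only node initially holding the input data and, by hypothesis, performs no multiplications itself. With the $2N^2$ input entries each contributing at least one unit of outgoing traffic from the source, the total communication volume is immediately at least $2N^2$.

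Concretely, I would fix an arbitrary valid schedule and, for each entry $a_{ik}$ of $A$, identify some product $a_{ik}b_{kj}$ appearing in the decomposition $c_{ij}=\sum_{k=1}^{N} a_{ik}b_{kj}$ of an output entry that must be produced. Since the source does not participate in computation, this multiplication is performed by some non-source processor, which therefore requires $a_{ik}$ in its local memory. Because the value $a_{ik}$ originates only at the source, it must have been transmitted out of the source at least once (possibly later relayed between processors, but the initial emission is unavoidable). The same argument applies symmetrically to every $b_{kj}$. Summing over the $N^2$ entries of $A$ and the $N^2$ entries of $B$, the source-outgoing traffic — and hence the total communication volume — is at least $2N^2$.

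The delicate point, and the main obstacle, is ruling out the possibility that a processor could ``possess'' an entry without ever receiving it, for instance by algebraically reconstructing it from linear combinations of other received data (as in Strassen-style schemes). To handle this cleanly, I would make explicit at the outset that the paper's computational model is the standard $O(N^3)$ multiplication algorithm already assumed in Section~\ref{sec:assumption}: input entries are treated as opaque scalars and the only operation on them is multiplication with a matching partner. Under this model, the possession of $a_{ik}$ at a processor is equivalent to $a_{ik}$ having been literally delivered to it, so the bijection between entries and source emissions goes through and the counting argument is routine. With the assumption stated, the bound $2N^2$ follows directly.
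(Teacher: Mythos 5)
Your proof is correct and follows essentially the same route as the paper's: each of the $2N^2$ input entries must leave the source at least once because the source performs no computation, so the total communication volume is at least $2N^2$. Your additional care in pinning the argument to the standard $O(N^3)$ computational model (ruling out Strassen-style reconstruction) is a refinement the paper's own one-paragraph proof leaves implicit, but it is the same counting argument.
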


\begin{proof}
    When conducting two $N * N$ matrices' multiplication, both matrices have to be sent from the source to the computing node, because the source doesn't take part in computing. Each matrix contributes a communication volume of $N^{2}$, so two matrices together are $2N^{2}$. Since each entry of these two matrices has to be sent at least once, the total communication volume is always greater than or equal to $2N^{2}$. Thus the lower bound stands.
\end{proof}

Back to Theorem \ref{LayerBasedPartitionTheorem}. Suppose there are $p$ processors, therefore according to LBP, the task is divided into $p$ layers. Processor $p_{1}$ takes matrix $A$'s leftmost $k_{1}$ columns and $B$'s upmost $k_{1}$ rows, and thus the communication volume to transfer the necessary processing data from source to processor $p_{1}$ is $N * k_{1} + N * k_{1} = 2Nk_{1}$. Similarly, the communication volume of processor $p_{2}$ is $2Nk_{2}$, processor $p_{3}$ is $2Nk_{3}$, ...etc. The total communication volume is \[C_{LBP} = \sum_{i=1}^{p}{2Nk_{i}} = 2N\sum_{i=1}^{p}{k_{i}} = 2N^2\]

As shown above, the communication volume of \emph{layer based partition} scheme reaches the lower bound. Therefore, \emph{layer based partition} scheme is communication-optimal.

\begin{lemma}
\label{RectangularPartitionCommunicationVolume}
Rectangular partition consumes a total communication volume greater than the lower bound.
\end{lemma}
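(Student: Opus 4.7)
The plan is to compute the communication cost of an arbitrary rectangular partition in closed form and then apply two elementary inequalities (AM--GM followed by a nonnegativity estimate) to force strict inequality against the lower bound $2N^{2}$ established in Lemma \ref{LowerBound}.

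First I would set up the cost formula. Let the $p\ge 2$ active processors be assigned sub-rectangles of the $N\times N$ result matrix with dimensions $h_{i}\times w_{i}$. To compute its sub-rectangle, processor $P_{i}$ needs the corresponding $h_{i}$ rows of $A$ and $w_{i}$ columns of $B$ from the source, each of length $N$, so the data delivered to $P_{i}$ has volume $N(h_{i}+w_{i})$ under the same per-recipient accounting used to establish the $2N^{2}$ lower bound. Summing, and using the tiling identity $\sum_{i} h_{i}w_{i}=N^{2}$, gives
\[
C_{\text{rect}} \;=\; N\!\sum_{i=1}^{p}(h_{i}+w_{i}), \qquad \sum_{i=1}^{p} h_{i}w_{i}=N^{2}.
\]

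Next I would bound $\sum_{i}(h_{i}+w_{i})$ from below in two steps. Applying AM--GM termwise yields $h_{i}+w_{i}\ge 2\sqrt{h_{i}w_{i}}$. Writing $a_{i}:=\sqrt{h_{i}w_{i}}\ge 0$, the tiling constraint becomes $\sum_{i} a_{i}^{2}=N^{2}$, and since $\bigl(\sum_{i} a_{i}\bigr)^{2} = \sum_{i} a_{i}^{2} + 2\sum_{i<j} a_{i}a_{j}$, we get $\sum_{i} a_{i} \ge \sqrt{\sum_{i} a_{i}^{2}} = N$. Chaining the two estimates,
\[
C_{\text{rect}} \;\ge\; 2N\sum_{i} \sqrt{h_{i}w_{i}} \;\ge\; 2N^{2}.
\]

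The main obstacle is upgrading this to the strict inequality the lemma actually asserts, rather than the matching bound. Fortunately the non-strict step is the easy one to analyze: equality in $\bigl(\sum a_{i}\bigr)^{2}\ge \sum a_{i}^{2}$ requires all cross terms $a_{i}a_{j}$ ($i\neq j$) to vanish, i.e.\ at most one $a_{i}$ is nonzero, meaning only one sub-rectangle is nonempty. This contradicts the very premise of a parallel rectangular partition with $p\ge 2$ active processors, so $\sum_{i} a_{i} > N$ strictly, and hence $C_{\text{rect}} > 2N^{2}$. I would finish by contrasting this with the LBP computation inside the proof of Theorem \ref{LayerBasedPartitionTheorem}, where the column/row partitions of $A$ and $B$ are disjoint, every matrix entry is delivered to exactly one processor, and the bound is saturated exactly — thereby making explicit that the gap over $2N^{2}$ is precisely the cost of the repeated transmission forced by any 2D rectangular tiling.
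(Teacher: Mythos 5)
Your proposal is correct and follows essentially the same route as the paper: the per-rectangle cost $N(h_{i}+w_{i})$, the AM--GM step $h_{i}+w_{i}\geq 2\sqrt{h_{i}w_{i}}$, and then $\sum_{i}\sqrt{s_{i}}>N$ from the tiling constraint $\sum_{i}s_{i}=N^{2}$ with $p>1$, which is exactly the paper's chain of inequalities. Your explicit justification of strictness (equality would force all but one sub-rectangle to be empty) is a slightly more careful phrasing of the paper's remark that each $s_{i}$ is positive and $p>1$, but it is the same argument.
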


\begin{proof}
    In \emph{rectangular partition}, suppose the output matrix is of size $N * N$, and each sub-rectangle's height is $h_{i}$, width is $w_{i}$, and area is $s_{i}$. The total communication volume according to \cite{Beaumont2} is: \[C_{REC} = \sum_{i=1}^p{(h_{i} + w_{i}) * N}\]
    Because $h_{i} + v_{i} \geq 2\sqrt{s_{i}}$, therefore we have:
    \begin{equation}
    \label{equ:commVolInequ}
        C_{REC} \geq 2N * \sum_{i=1}^p{\sqrt{s_{i}}}
    \end{equation}
    In the meanwhile, we have the sum of each sub-rectangle: \[\sum_{i=1}^p{s_{i}} = N^2\]
    \begin{equation}
    \label{equ:sumRecArea}
        \sum_{i=1}^p{(\sqrt{s_{i}})^2} = N^2
    \end{equation}
    Since each $s_{i}$ is positive and $p > 1$: \[\sum_{i=1}^p{(\sqrt{s_{i}})^2} < (\sum_{i=1}^p{\sqrt{s_{i}}})^2\]
    Combined with equation (\ref{equ:sumRecArea}), we get: \[(\sum_{i=1}^p{\sqrt{s_{i}}})^2 > N^2\]
    \begin{equation}
    \label{equ:sumSqrtRecArea}
        \sum_{i=1}^p{\sqrt{s_{i}}} > N
    \end{equation}
    Take (\ref{equ:sumSqrtRecArea}) back to (\ref{equ:commVolInequ}),
    \begin{equation}
        C_{REC} > 2N^2
    \end{equation}
\end{proof}

Thus the total communication volume of LBP($C_{LBP}$) is less than \emph{rectangular partition}($C_{REC}$).

\subsection{Memory Limit Constraint}
\label{sec:memory}
The memory limit can be a tight constraint for \emph{layer based partition}, for the case where the result matrix's size is large enough to exceed memory.
That's a bottleneck. For that case, because we can't hold the entire result matrix in memory, we need to utilize hard disk and batch processing. We can load the memory of each processor to its limit, do the multiplication and write the result to disk. After that, we continue to load the next batch of data until completing all data sets. This approach is doable with the fast development of hardware like SSD and et cetera.

After the result of one layer is acquired on a single machine, we can leave the result there on the hard disk of that machine.
We can do asynchronous processing to aggregate each layer. Alternatively, we can do lazy sync-up.
For lazy sync-up we only need to read from the disks and aggregate each layer when
there¡¯s a read request, which shouldn't be very frequent for large matrices like this.
After all, if a result matrix's size exceeds the memory limit, after aggregating all the layers,
it has to be stored on hard disks anyway. We change this centralized storage manner to a distributed manner, by storing each layer of the result matrix on the hard disk of each processor.

\subsection{Summary}
To summarize, \emph{layer based partition} scheme avoids the NP-complete \emph{rectangular partition} problem, and has been proven to be communication-optimal. The next step is to discuss how to apply it in different network topologies.

In the following section, we discuss the details of applying the LBP scheme on heterogeneous processor networks. We focus on analyzing LBP's load balancing strategy, namely, how to schedule load distribution among different processors in the network. Since the LBP scheme already ensures communication volume to be minimal, the optimization goal of the load balancing strategy is to optimize the task completion time, which is another primary and widely discussed target of optimization on heterogeneous processor platforms. Again, as we mentioned in the introduction part, we mark it as the completion of the task once all the distributed multiplication
results are acquired on each processor. The total time it takes to compute those multiplication results of all the layers is defined as the task completion time here.

\section{LBP in Single-Neighbor Networks}
\label{chap:LBPSingleNeighbor}
To begin with, we discuss applying LBP in networks where each node can only receive loads from at most one of its neighbors. That node is allowed to further send loads to its other neighbors, except for the one where it obtains data from. We name this type of networks \emph{single-neighbor networks} in the following parts of our paper. Typical examples of \emph{single-neighbor networks} include star network, tree network, muti-level tree, etc. Here we use star networks to discuss the load balancing strategy under four scenarios: Sequential Communication Simultaneous Start (SCSS), Sequential Communication Consecutive Start (SCCS), Parallel Communication Simultaneous Start (PCSS), Parallel Communication Consecutive Start (PCCS). Sequential Communication means each node can only send load to its neighbors sequentially, while, Parallel Communication allows the source to transmit simultaneously. Consecutive Start means each non-source processor can only start processing data after receiving all the data it needs, while, Simultaneous Start allows each node to start processing while receiving data.

\begin{theorem}
\emph{(Single Source Network Load Balancing Theorem)}
\label{SingleNeighborNetworkLoadBalancingTheorem}
For single-neighbor networks, the condition to reach load balancing is to have all the nodes finish working at the same time.
\end{theorem}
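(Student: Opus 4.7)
The plan is to argue by a standard exchange/perturbation argument, showing that any optimum makespan schedule must have all processors finish simultaneously. Let $T$ denote the task completion time, and write $t_\ell(k_1,\ldots,k_p)$ for the finishing time of processor $p_\ell$ as a function of the integer layer widths assigned to each processor, subject to $\sum_\ell k_\ell = N$. The key observation to set up is that, under LBP, $t_\ell$ is a non-decreasing, piecewise-linear function of $k_\ell$: the source must send $N k_\ell$ entries of $A$ plus $N k_\ell$ entries of $B$, and the processor must perform $N^2 k_\ell$ scalar multiplications. In a single-neighbor network, the transmissions from the source occur in some sequence, so each $t_\ell$ is a sum of a (sequence-dependent) waiting interval plus a term strictly monotone in $k_\ell$.

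First, I would prove the forward direction (optimum $\Rightarrow$ simultaneous finish) by contradiction. Suppose an optimal allocation $(k_1,\ldots,k_p)$ achieves makespan $T$, but some node $p_j$ finishes strictly earlier, $t_j < T$, while some node $p_i$ attains $t_i = T$. Consider the perturbation $k_i \mapsto k_i - \delta$, $k_j \mapsto k_j + \delta$ for a small positive $\delta$. Because $t_i$ is strictly increasing in $k_i$ (it pays both the transmission time for $2Nk_i$ entries and the computation time for $N^2 k_i$ products), $t_i$ strictly decreases under the perturbation, giving $t_i' < T$. By the gap assumption $t_j < T$ and continuity (piecewise-linearity) of $t_j$ in $k_j$, we can choose $\delta$ small enough so that $t_j' < T$ as well. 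All other $t_\ell$ for $\ell \neq i,j$ are either unchanged or reduced, since the source frees up communication time when serving $p_i$ and reassigns only $\delta$ to $p_j$; in every reasonable communication sequence this cannot inflate another processor's finish beyond $T$ for sufficiently small $\delta$. Hence the makespan strictly drops below $T$, contradicting optimality.

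The reverse direction (simultaneous finish $\Rightarrow$ optimum) is even quicker. If all $t_\ell = T$, any re-allocation must decrease some $k_\ell$ and increase another $k_{\ell'}$; by the same monotonicity of $t_{\ell'}$ in $k_{\ell'}$, the perturbed $t_{\ell'}$ exceeds $T$, so the new makespan is at least $T$. Thus simultaneous completion is both necessary and sufficient for load balance.

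I expect the main obstacle to lie in the forward direction, specifically in controlling how the transmission sequence from the single source interacts with the perturbation. Because the source serves neighbors sequentially, shifting $\delta$ of load from $p_i$ to $p_j$ shortens the transmission slot of $p_i$ and lengthens that of $p_j$, which can shift the start times of whichever processors appear later in the transmission order. The careful step will be to show that, by choosing $\delta$ small enough and (if needed) keeping the transmission order fixed, the net change in every $t_\ell$ is either negative or smaller than the slack $T - t_\ell > 0$. Once this bookkeeping is handled, the contradiction closes and the theorem follows; this same framework will then specialize to the four communication modes (SCSS, SCCS, PCSS, PCCS) treated in the subsequent subsections.
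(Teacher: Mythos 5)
The paper never actually proves this statement: it cites Bharadwaj et al.'s monograph and gives exactly the busy-to-idle exchange intuition you start from, so in spirit your plan matches the paper. The problem is that, taken as a standalone proof, it breaks down at precisely the step you flag as the main obstacle, and your proposed fix (take $\delta$ small relative to the slack $T-t_\ell>0$) does not close it. In an optimal allocation several processors may be \emph{critical}, i.e.\ finish exactly at $T$ with zero slack. Under sequential distribution (SCSS/SCCS), giving $\delta$ more load to the early-finishing processor $p_j$ lengthens the source's transmission slot for $p_j$ by $2N\delta z_j T_{cm}$; a critical processor served after $p_j$ but before $p_i$ is therefore delayed by $2N\delta z_j T_{cm}>0$, and one served after both is delayed by $2N\delta(z_j-z_i)T_{cm}$, which is positive whenever $z_j>z_i$. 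Since these processors have zero slack, their finish times exceed $T$ for \emph{every} $\delta>0$, so shrinking $\delta$ never rescues the exchange and no contradiction is obtained. A correct argument must do something more structural: decrement all critical processors simultaneously (note the paper's own phrasing ``loads can be reduced from busy processors'', plural), choose donor and recipient consistently with the service order, or argue directly from the timing equations / LP structure, which is effectively what the monograph and Section~4 of the paper do by solving the equal-finish-time system.

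Your converse direction has a related gap. The claim that increasing $k_{\ell'}$ must increase $t_{\ell'}$ ignores that the compensating decrease at a donor $\ell$ served earlier shortens $\ell'$'s waiting time: under SCCS, if $2Nz_\ell T_{cm} > 2Nz_{\ell'}T_{cm} + N^2 w_{\ell'}T_{cp}$, then shifting load from $\ell$ to $\ell'$ strictly decreases \emph{every} finish time, so the equal-finish allocation is not even optimal in that parameter regime (the classical divisible-load results exclude it through conditions on link speeds, participation, and distribution order, and for LBP it is implicitly excluded only because the $N^2$ computation term dominates for large $N$). So sufficiency cannot follow from the two-line monotonicity argument without importing such hypotheses --- which is presumably why the paper defers to the monograph rather than proving the statement.
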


This load balancing theorem comes from Bharadwaj etc.'s monograph \cite{Bharadwaj}. The idea is if not all processors finish processing at the same time, loads can be reduced from busy processors, and assigned to idle processors to speed up the overall process. Consequently, the minimal finishing time is obtained only when all processors finishing working at the same time.

\begin{table}[H]
\centering
\caption{Table of the Variables/Constants for Chapter \ref{chap:LBPSingleNeighbor}}
\begin{tabular}{l|l}
\hline
Var/Const & Meaning\\
\hline
$k_{i}$ & The number of rows or columns \\ & assigned to $i$th processor\\
$w_{i}$ & The inverse of the computing speed of \\ & $i$th processor\\
$z_{i}$ & The inverse of the link speed of the $i$th link\\
$T_{cp}$ & Computing intensity constant: \\
 & Unit load on $i$th processor is processed \\ & in $w_{i}T_{cp}$ seconds\\
$T_{cm}$ & Communication intensity constant: \\
 & Unit load on $i$th link can be transmitted \\ & in $z_{i}T_{cm}$ seconds\\
$T_{f}$ & The total finishing time of the entire network\\
\hline
\end{tabular}
\label{table:variablesSingleNeighbor}
\vspace{-0.15in}
\end{table}

In practice, because $k_{i}$ must be an integer, so it may be impossible to generate a set of integers $\{k_{i}\}$ that makes each processor finish exactly at the same time. Our approach is to relax each $k_{i}$ as a real number first, solve the relaxed problem, then find the integer solution that is closest to the solution of the relaxed problem. We believe that by making the integer solution as close to the optimal solution of the relaxed problem (in real domain) as possible, it is more likely that we can get the optimal solution of the original problem in which each $k_{i}$ is an integer.

\subsection{Sequential Communication Simultaneous Start}
In this subsection, we analyze the SCSS case where the source transmit loads to each processor sequentially, and in the mean time communication can overlap with computing. Figure \ref{fig:SCSSandSCCS}(a) displays the time sequence of this case.

\begin{figure}[H]
\centering
  \includegraphics[width=3.6in]{./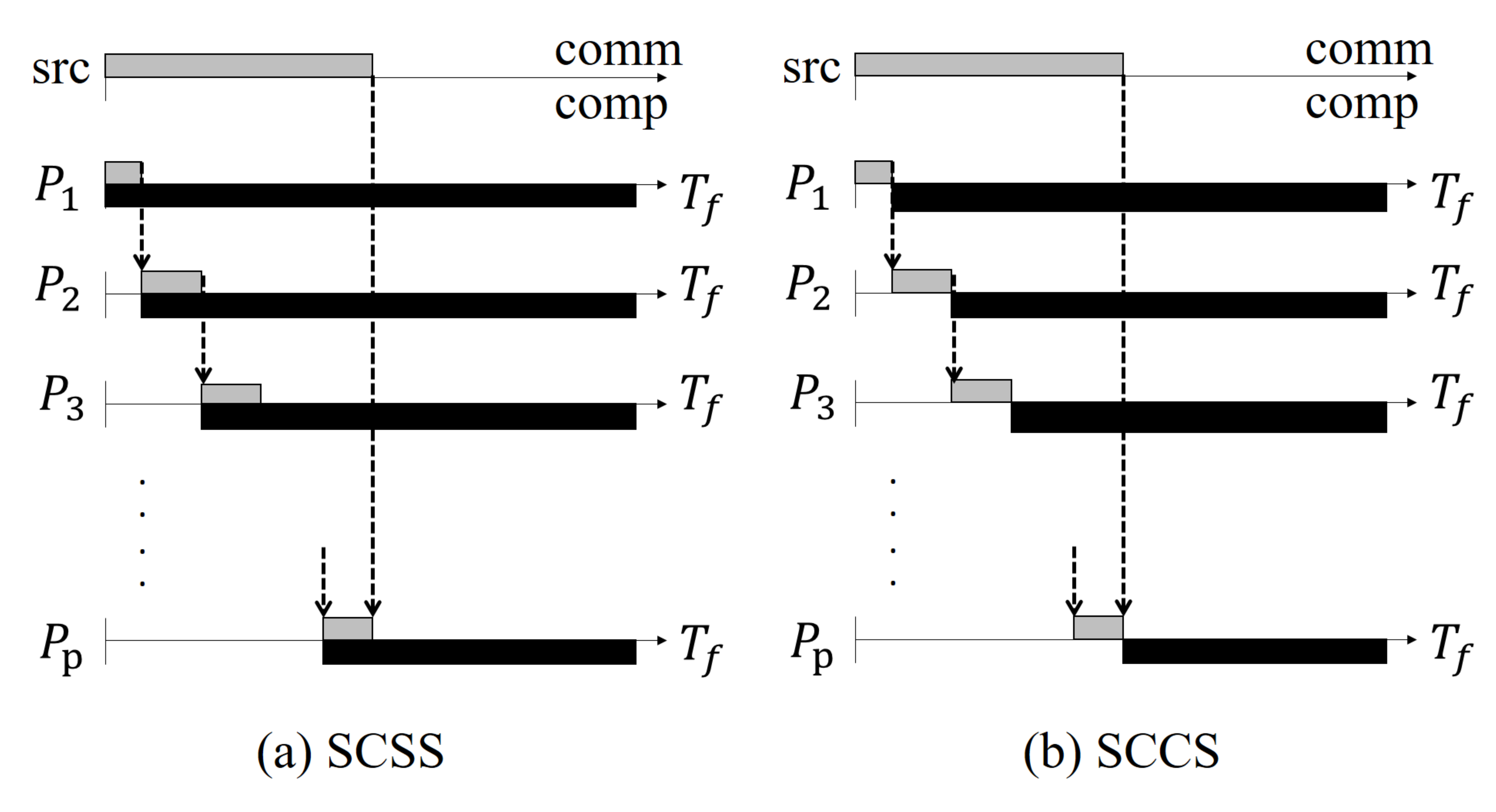}
  \caption{SCSS mode and SCCS mode.}\label{fig:SCSSandSCCS}
\end{figure}

Each processor calculates one layer of the result matrix. Take processor $i$ for instance. To work out $i^{th}$ layer's data, processor $i$ needs to get $N$ rows $* ~k_{i}$ columns of $A$'s data, plus $k_{i}$ rows $* ~N$ columns of $B$'s data. Thus its communication volume is $2k_{i}N$, and the corresponding communication time on $i^{th}$ link is $2k_{i}Nz_{i}T_{cm}$. Furthermore, for each entry of $i^{th}$ layer, it needs $k_{i}$ multiplications to get its value. There are as many as $N^{2}$ entries in this layer. Therefore the total number of multiplication is $k_{i}*N^{2}$. The corresponding computation time is $k_{i}N^{2}w_{i}T_{cp}$.
Consider all the timing relationship shown in Figure \ref{fig:SCSSandSCCS}(a), we have the following equations:

\begin{equation}
k_{1}N^{2}w_{1}T_{cp} = k_{2}N^{2}w_{2}T_{cp} + 2k_{1}Nz_{1}T_{cm}
\end{equation}
\begin{equation}
k_{2}N^{2}w_{2}T_{cp} = k_{3}N^{2}w_{3}T_{cp} + 2k_{2}Nz_{2}T_{cm}
\end{equation}
\begin{equation}
k_{i-1}N^{2}w_{i-1}T_{cp} = k_{i}N^{2}w_{i}T_{cp} + 2k_{i-1}Nz_{i-1}T_{cm}
\end{equation}
\[
\begin{array}{c}
\vdots
\end{array}
\]
\begin{equation}
k_{p-1}N^{2}w_{p-1}T_{cp} = k_{p}N^{2}w_{p}T_{cp} + 2k_{p-1}Nz_{p-1}T_{cm}
\end{equation}
\begin{equation}
k_{1}+k_{2}+k_{3}+\cdots+k_{p} = N
\end{equation}

The set of equations above consists of $p-1$ equations specifying the time sequence relationship between pairwise processors, and one equation specifying the normalization constraint. Meanwhile, we have $p$ unknown variables $k_{1}, k_{2}, ..., k_{p}$. Therefore, the set of equations are solvable. Solving them, we get:
\begin{equation}
k_{i} = \prod_{j=2}^{i}\frac{Nw_{j-1}T_{cp}-2z_{j-1}T_{cm}}{Nw_{j}T_{cp}}k_{1}, i=2,3,...p
\end{equation}

where $k_{1}$ is defined as:
\begin{equation}
k_{1} = \frac{N}{1 + \sum_{i=2}^{p}\prod_{j=2}^{i}\frac{Nw_{j-1}T_{cp}-2Z_{j-1}T_{cm}}{Nw_{j}T_{cp}}}
\end{equation}

The overall finishing time of the whole network can be obtained through the following equation.
\begin{equation}
T_{f} = k_{1}N^{2}w_{1}T_{cp}
\end{equation}

\subsection{Sequential Communication Consecutive Start}
In this section we discuss the case where the source sequentially assigns load to each processor, and communication can not overlap with computation. According to Figure \ref{fig:SCSSandSCCS}(b), we have the following equations:

\begin{equation}
k_{1}N^{2}w_{1}T_{cp} = k_{2}N^{2}w_{2}T_{cp} + 2k_{2}Nz_{2}T_{cm}
\end{equation}
\begin{equation}
k_{2}N^{2}w_{2}T_{cp} = k_{3}N^{2}w_{3}T_{cp} + 2k_{3}Nz_{3}T_{cm}
\end{equation}
\begin{equation}
k_{i}N^{2}w_{i}T_{cp} = k_{i+1}N^{2}w_{i+1}T_{cp} + 2k_{i+1}Nz_{i+1}T_{cm}
\end{equation}
\[
\begin{array}{c}
\vdots
\end{array}
\]
\begin{equation}
k_{p-1}N^{2}w_{p-1}T_{cp} = k_{p}N^{2}w_{p}T_{cp} + 2k_{p}Nz_{p}T_{cm}
\end{equation}
\begin{equation}
k_{1}+k_{2}+k_{3}+\cdots+k_{p} = N
\end{equation}

These $p$ equations contains $k_{1}, k_{2}, ..., k_{p}$ unknown variables. The solution set is:
\begin{equation}
k_{i} = \prod_{j=2}^{i}\frac{Nw_{j-1}T_{cp}}{Nw_{j}T_{cp} + 2z_{j}T_{cm}}k_{1}, i=2,3,...p
\end{equation}

where $k_{1}$ is defined as:

\begin{equation}
k_{1} = \frac{N}{1+\sum_{i=2}^{p}\prod_{j=2}^{i}\frac{Nw_{j-1}T_{cp}}{Nw_{j}T_{cp} + 2z_{j}T_{cm}}}
\end{equation}

The overall finishing time of the whole network is:
\begin{equation}
T_{f} = k_{1}N^{2}w_{1}T_{cp} + 2k_{1}Nz_{1}T_{cm}
\end{equation}

\subsection{Parallel Communication Consecutive Start}

For the case where source communicates with each processor in a parallel manner, and communication can not overlap with computation, we have the following equations:
\begin{equation}
k_{1}N^{2}w_{1}T_{cp} + 2k_{1}Nz_{1}T_{cm} = k_{2}N^{2}w_{2}T_{cp} + 2k_{2}Nz_{2}T_{cm}
\end{equation}
\begin{equation}
k_{2}N^{2}w_{2}T_{cp} + 2k_{2}Nz_{2}T_{cm} = k_{3}N^{2}w_{3}T_{cp} + 2k_{3}Nz_{3}T_{cm}
\end{equation}
\begin{equation}
k_{i}N^{2}w_{i}T_{cp} + 2k_{i}Nz_{i}T_{cm} = k_{i+1}N^{2}w_{i+1}T_{cp} + 2k_{i+1}Nz_{i+1}T_{cm}
\end{equation}
\[
\begin{array}{c}
\vdots
\end{array}
\]
\begin{equation}
k_{p-1}N^{2}w_{p-1}T_{cp} + 2k_{p-1}Nz_{p-1}T_{cm} = k_{p}N^{2}w_{p}T_{cp} + 2k_{p}Nz_{p}T_{cm}
\end{equation}
\begin{equation}
k_{1}+k_{2}+k_{3}+\cdots+k_{p} = N
\end{equation}

\begin{figure}[H]
\centering
  \includegraphics[width=3.6in]{./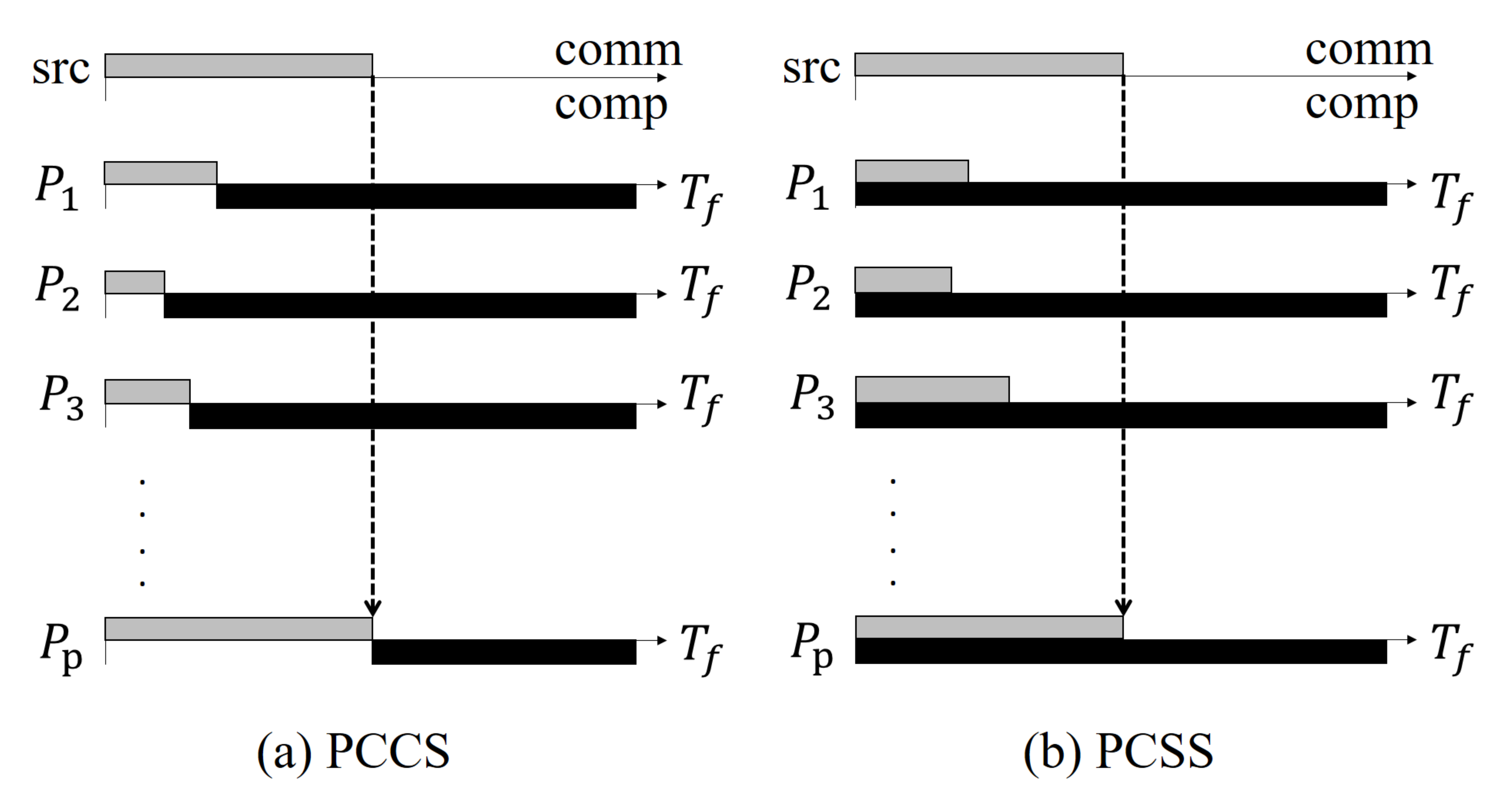}
  \caption{PCCS mode and PCSS mode.}\label{fig:PCCSandPCSS}
\end{figure}

Solving the equations, we have:
\begin{equation}
k_{i} = \prod_{j=2}^{i}\frac{Nw_{j-1}T_{cp} + 2z_{j-1}T_{cm}}{Nw_{j}T_{cp} + 2z_{j}T_{cm}}k_{1}, i=2,3,...p
\end{equation}

where $k_{1}$ is defined as:

\begin{equation}
k_{1} = \frac{N}{1+\sum_{i=2}^{p}\prod_{j=2}^{i}\frac{Nw_{j-1}T_{cp} + 2z_{j-1}T_{cm}}{Nw_{j}T_{cp} + 2z_{j}T_{cm}}}
\end{equation}

The overall finishing time of the whole network is:
\begin{equation}
T_{f} = k_{1}N^{2}w_{1}T_{cp} + 2k_{1}Nz_{1}T_{cm}
\end{equation}

\subsection{Parallel Communication Simultaneous Start}
The case where source can transmit data to all processors at the same time, and communication can overlap with computation is shown in Figure \ref{fig:PCCSandPCSS}(b). All processor start processing and end processing at the same time, and the size of load should be proportional to each processor's computing speed.
\begin{equation}
k_{i}N^{2}w_{i}T_{cp} = k_{i-1}N^{2}w_{i-1}T_{cp}, i=2,3..., p
\end{equation}
\begin{equation}
k_{1}+k_{2}+k_{3}+\cdots+k_{p} = N
\end{equation}
Solving the equations, we have:
\begin{equation}
k_{i} = \prod_{j=2}^{i} \frac{w_{j-1}}{w_{j}}k_{1}, i=2,3,...,p
\end{equation}

\begin{equation}
k_{1} = \frac{N}{1+\sum_{i=2}^{p}\prod_{j=2}^{i}\frac{w_{j-1}}{w_{j}}}
\end{equation}

\begin{equation}
T_{f} = k_{1}N^{2}w_{1}T_{cp}
\end{equation}

\subsection{Integer Adjustment}
By solving the above equations, we get a set of real numbers $\{k_{i}\}$ that generates the minimal task finishing time. The next step is to find the closest integer solution. We'll have a deeper discussion of how we obtain this closest integer solution from real number optimal solution, in the next section after we discuss \emph{multi-neighbor networks}. Because we'll face the same problem there, too. Here we'll provide a simple heuristic as shown below.

We can round off the real number solution first, such that a processor gets the whole row/column assignment if it takes more than half of the fractional part of that row/column in the real number optimal solution. After the rounding off process, if the sum of each $k_{i}$ fails to equal $N$, then we sort the processors in ascending order of their actual finish time $T_{f}(i)$. If the sum is less than $N$, then from the processor that has the smallest $T_{f}(i)$, we assign an extra row or column to that processor until the sum equals $N$. Otherwise, starting from the processor that has the largest $T_{f}(i)$, we remove one row/column from that processor until the sum equals $N$. After the process we obtain the actual integer assignment, we can further update the overall task finishing time.

\section{LBP In Multi-Neighbor Networks}
\label{chap:LBPMultiNeighbor}
In this subsection, we discuss the load balance strategy for another type of network, in which each node is allowed to receive loads from more than one of its neighbors. We call this type of network \emph{multi-neighbor networks}, in contrast to the \emph{single-neighbor networks} discussed previously. The target of the load balancing strategy is still minimizing the task finishing time. The definition of variables and constants are listed in the following Table \ref{table:variablesMultiNeighbor} and Table \ref{table:constantsMultiNeighbor}.

\begin{table}[H]
\centering
\caption{Table of the Variables for Chapter \ref{chap:LBPMultiNeighbor}}
\begin{tabular}{l|l}
\hline
Variables & Meaning\\
\hline
$T_{s}(i)$ & The start time of the $i$th node in the network\\
$T_{f}(i)$ & The finish time of the $i$th node in the network\\
$k_{i}$ & The number of columns or rows of the\\ & multiplier matrix that are assigned to $i$th node\\
$\phi(i,j)$ & The volume of load transmitted from\\ & node $i$ to node $j$ \\
\hline
\end{tabular}
\label{table:variablesMultiNeighbor}
\vspace{-0.15in}
\end{table}

\begin{table}[H]
\centering
\caption{Table of the Constants for Chapter \ref{chap:LBPMultiNeighbor}}
\begin{tabular}{l|l}
\hline
Constants & Meaning\\
\hline
$G(V,E)$ & The mesh network with fixed topology\\
$S$ & The set of source nodes\\
$z(i,j)$ & The inverse of the link speed of the link\\ & connecting node $i$ and $j$\\
$T_{cm}$ & Communication intensity constant \\
$w(i)$ & The inverse of the computing speed\\ & of $i$th processor \\
$T_{cp}$ & Computing intensity constant \\
$N$ & The size of both square multiplier matrices \\
$p$ & The number of nodes in this mesh network \\
$\tau(i,j)$ & specifies the position relationship\\ & of two nodes $i$ and $j$ in the network. \\
 & $\tau(i,j)=1$ if $i$ is in a position that should \\ & transmit to $j$, and $\tau(i,j)=0$ if otherwise \\
$D_{i}$ & specifies the storage size of node $i$. According \\ & to \ref{sec:memory}, this can be memory or hard disk \\
\hline
\end{tabular}
\label{table:constantsMultiNeighbor}
\vspace{-0.15in}
\end{table}

Note that in \emph{multi-neighbor networks}, it is not applicable to apply the equal processing time load balancing strategy such as the one addressed in Theorem \ref{SingleNeighborNetworkLoadBalancingTheorem}. Zhang et al. \cite{Zhangzeming} analyze a similar problem. To summarise, if node $i$ receives load from only one of its neighbors, says, $j$, we can obtain the following equation, $T_{s}(i) = T_{s}(j) + \phi(j,i)z(j,i)T_{cm}$, meaning that node $i$ starts processing when it finishes receiving load from node $j$. However, if node $i$ also receives load from another neighboring node, say $j'$, the above equation may fail to stand, because the two neighbor nodes $j$ and $j'$ might not finish transmitting at the same time, i.e,
\begin{equation}
T_{s}(j) + \phi(j,i)z(j,i)T_{cm} \neq T_{s}(j') + \phi(j',i)z(j',i)T_{cm}
\end{equation}
and we can not determine node $i$'s start time.

Theorem \ref{SingleNeighborNetworkLoadBalancingTheorem} can optimize task finishing time for \emph{single-neighbor networks} like tree, multi-level tree, etc. For \emph{multi-neighbor networks} like multi-root tree, mesh, ring, torus, hypercube, since we can not apply Theorem \ref{SingleNeighborNetworkLoadBalancingTheorem}, we formulate the load balancing problem as an optimization problem, called \emph{Minimize Maximum Finishing Time in Layer Based Partition(MMFT-LBP)} problem.

A mesh is a typical \emph{multi-neighbor network} where one single node can receive data from multiple neighbors. In the following part we discuss solving MMFT-LBP problem in mesh networks, to shed light on solving MMFT-LBP in other \emph{multi-neighbor networks}.

\subsection{MMFT-LBP In Mesh}

\begin{figure}[H]
\centering
  \includegraphics[width=1.2in]{./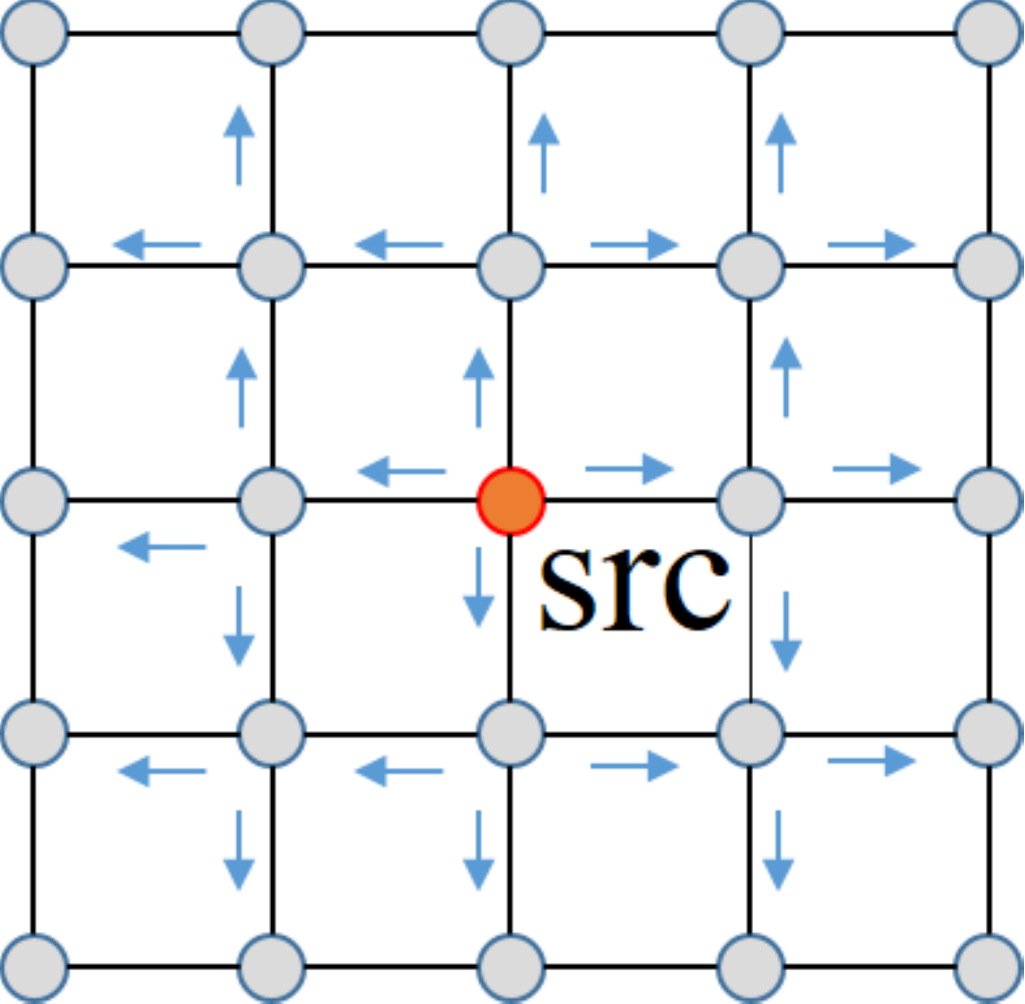}
  \caption{data flow in mesh network.}\label{fig:meshFlow}
\end{figure}

In this subsection we focus on the \emph{MMFT-LBP} problem in mesh networks. We denote the mesh network as a graph $G(V,E)$, shown in figure \ref{fig:meshFlow}. The graph contains $p$ nodes which forms a dimension of $X * Y$, where $p = X * Y$. Generally speaking, the mesh has better performance in scheduling if the source is closer to geometric center. So we assume that the source is located at the center of this mesh network, and divides the mesh into four quadrants. The specific data flow pattern in each quadrant is shown in Figure \ref{fig:meshFlow}.

For the limitation of scope of this paper, we only analyze PCCS mode, that is, data is forwarded in the network in the `parallel communication and consecutive start' pattern. `Parallel communication' allows each node to talk to its multiple neighbors at the same time. `Consecutive start' means that each node has to wait until it receives its whole share of data before it can start computing. In short, we utilize PCCS mode as each node's communication and processing model in our \emph{multi-neighbor networks}, and leave the other communication patterns for future study.

We present the \emph{MMFT-LBP} problem in the following.

\noindent
\textbf{MMFT-LBP}

\noindent
\textbf{Variables:} $\{k_{i}\}$, $\{T_{s}(i)\}$, $\{T_{f}(i)\}$, $\{\phi(i,j)\}$, $T_{f}$

\noindent
\textbf{Objective:}

\begin{equation}
Minimize: \max_{i \in G(V,E)} \{T_{f}(i)\}
\label{obj}
\end{equation}
\textbf{Constraint:}
\begin{equation}
T_{s}(i) = 0, ~\forall i \in S
\label{src start}
\end{equation}
\begin{equation}
T_{s}(i) = \max_{j \in G(V,E)} \{\tau(j,i)\big{[}T_{s}(j) + \phi(j,i)\cdot z(j,i)T_{cm}\big{]}\}, ~\forall i \notin S
\label{nonsrc start}
\end{equation}
\begin{equation}
T_{f}(i) = T_{s}(i) + k_{i}N^{2}w(i)T_{cp}, ~\forall i \in G(V,E)
\label{finish time}
\end{equation}
\begin{equation}
2N^{2} - \sum_{j \in G(V,E)} \phi(i, j) = 0, ~\forall i\in S
\label{src part}
\end{equation}
\begin{equation}
\sum_{j \in G(V,E)} \phi(j,i) - \sum_{j' \in G(V,E)} \phi(i,j') = 2k_{i}N, ~\forall i\notin S
\label{nonsrc part}
\end{equation}
\begin{equation}
\phi(i,j) \geq 0, ~\forall ~\tau(i,j) = 1
\label{phi geq 0}
\end{equation}
\begin{equation}
\phi(i,j) = 0, ~\forall ~\tau(i,j) = 0
\label{phi eq 0}
\end{equation}
\begin{equation}
k_{i} \in \mathbb{Z_+}, ~\forall i\in G(V,E)
\label{k positive integer}
\end{equation}
\begin{equation}
k_{i} = 0, ~\forall i\in S
\label{src share}
\end{equation}
\begin{equation}
2k_{i}N + N^{2} \leq D_{i}, ~\forall i \notin S
\label{storage condition}
\end{equation}
\begin{equation}
\sum_{i=1}^{p} k_{i} = N
\label{normalization}
\end{equation}

\noindent
\textbf{Remarks:}
\begin{itemize}
\item The objective of \emph{MMFT-LBP}(\ref{obj}) is to minimize the maximum finishing time of the mesh network.
\item Constraint (\ref{src start}) specifies the start time of the source. In this paper's case, there's only one node in set S.
\item Constraint (\ref{nonsrc start}) specifies that the start time of those non-source nodes should be the maximum time that they finish receiving all loads from their adjacent neighbors.
\item $\tau(i,j)$ is a constant once the mesh network is determined and fixed.
\item Constraint (\ref{finish time}) defines each node's finishing time. According to LBP, each node's computing load is the total number of multiplication it conducts $k_{i}*N^{2}$.
\item Constraint (\ref{src part}) shows that because the source does not take part in processing, it sends out all of its load: the two multiplier matrices. And each entry of the matrices is sent only once.
\item Constraint (\ref{nonsrc part}) defines the amount of load taken by those non-source nodes.
\item Constraint (\ref{phi geq 0}) - (\ref{phi eq 0}) list different values of $\phi(i,j)$ in different cases. (\ref{phi geq 0}) is true if node $i$ is in a position that should send load to node $j$. (\ref{phi eq 0}) applies when two nodes are not adjacent or node $j$ is in a position that should send load to node $i$.
\item The $k_{i}$ in constraint (\ref{k positive integer}) is the number of columns taken by each node, and should be integers.
\item Constraint (\ref{src share}) shows that the source node does not take part in processing so its $k_{i} = 0$, and $T_{f}(i) = 0$ through constraint (\ref{finish time}).
\item Constraint (\ref{storage condition}) shows that for each node, its storage size should at least be able to store the load from two multiplier matrices ($2k_{i}N$), and the result matrix ($N^2$).
\item Constraint (\ref{normalization}) is the normalization constraint. The number of columns taken by each node should sum up to be the side length of the multiplier matrix.
\end{itemize}

Both the objective function and constraints contain maximum form of formulas. To solve it, we firstly reorganize the problem and remove those maximum forms of formulas, and show that both the original and reorganized form have the same optimal solution.

\subsection{MFT-LBP In Mesh}
The objective function of \emph{MMFT-LBP} problem contains maximum form of formulas, which makes it hard to solve directly. To obtain the optimal solution of \emph{MMFT-LBP} problem, the first step is to reorganize the objective function.

We introduce one additional unknown variable $T_{f}$, and a set of constraints $T_{f} \geq T_{f}(i), \forall i \in G(V,E)$, which ensure $T_{f}$ to be no earlier than any node's finishing time $T_{f}(i)$. Then we transform the objective function from (\ref{obj}) to $Minimize: T_{f}$. The optimal solution to the original problem is still the optimal solution to the transformed problem, because when $T_{f}$ is minimized, $T_{f} = \max_{i \in G(V,E)} \{T_{f}(i)\}$.

Similarly, we relax constraint (\ref{nonsrc start}) to be the following linear inequality.
\begin{equation}
T_{s}(i) \geq \tau(j,i)[T_{s}(j) + \phi(j,i)\cdot z(j,i)T_{cm}], ~\forall i \notin S
\label{nonsrc start relax}
\end{equation}

This inequality (\ref{nonsrc start relax}) implies processor $i$ does not necessarily need to begin processing immediately when it finishes receiving all the data, it can choose to hold the data for a while and then start processing. Therefore, constraint (\ref{nonsrc start relax}) increases the solution space defined by constraint (\ref{nonsrc start}) to incorporate the following:

\begin{equation}
T_{s}(i) > \max_{j \in G(V,E)} \{\tau(j,i)\big{[}T_{s}(j) + \phi(j,i)\cdot z(j,i)T_{cm}\big{]}\}, ~\forall i \notin S
\label{nonsrc start relax2}
\end{equation}

Even if inequality (\ref{nonsrc start relax}) increases the feasible solution space, the optimal solution still remains the same. This is because the target of this problem is to minimize the overall finishing time, the minimal finishing time is achieved always when each node starts processing and forwarding once it completes receiving load from its neighbors, even if it is allowed to hold on for a while before it starts processing and forwarding.

In \emph{MMFT-LBP} problem statements, if we use constraint (\ref{nonsrc start relax}) to replace constraint (\ref{nonsrc start}), we get a new problem called \emph{Minimize Finish Time with Layer Based Partition(MFT-LBP)}. As we have discussed, we have the following theorem.

\begin{theorem}
\emph{(Maximization Relaxation Theorem)}
\label{MaximizationRelaxationTheorem}
\emph{MMFT-LBP} problem has the same optimal solution with \emph{MFT-LBP} problem.
\end{theorem}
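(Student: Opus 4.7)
The plan is to prove the two optimal values coincide by showing a two-sided inequality between the optima of \emph{MMFT-LBP} and \emph{MFT-LBP}, using the fact that \emph{MFT-LBP} is a relaxation on one side, and a constructive ``pull-back'' of start times on the other.

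First I would observe the easy direction: any feasible solution of \emph{MMFT-LBP} is automatically feasible for \emph{MFT-LBP}. Indeed, if (\ref{nonsrc start}) holds as an equality with a maximum on the right-hand side, then for every individual index $j$ we have $T_{s}(i) \geq \tau(j,i)[T_{s}(j) + \phi(j,i)z(j,i)T_{cm}]$, which is exactly (\ref{nonsrc start relax}). All other constraints are shared verbatim, so the feasible region of \emph{MMFT-LBP} is contained in that of \emph{MFT-LBP}, yielding $\mathrm{opt}(\text{MFT-LBP}) \leq \mathrm{opt}(\text{MMFT-LBP})$.

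For the reverse direction, I would take any feasible (in particular, optimal) solution $(\{k_{i}\},\{T_{s}(i)\},\{T_{f}(i)\},\{\phi(i,j)\},T_{f})$ of \emph{MFT-LBP} and construct a new solution that is feasible for \emph{MMFT-LBP} with no larger objective. The flow pattern determined by $\tau(\cdot,\cdot)$ and the fact that the source sits at the center of the mesh with data flowing outward into the four quadrants means the underlying transmission graph is a DAG rooted at $S$; this gives a well-defined topological order on the nodes. I would keep $\{k_{i}\}$ and $\{\phi(i,j)\}$ unchanged (so constraints (\ref{src part})--(\ref{normalization}) continue to hold), and redefine the start times in topological order by setting $T'_{s}(i) = 0$ for $i \in S$ and
\begin{equation}
T'_{s}(i) = \max_{j \in G(V,E)}\{\tau(j,i)[T'_{s}(j) + \phi(j,i) z(j,i) T_{cm}]\}, \quad \forall i \notin S.
\end{equation}
An induction along the topological order, using (\ref{nonsrc start relax}) at each inductive step, shows $T'_{s}(i) \leq T_{s}(i)$ for every $i$. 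Consequently the recomputed finish times $T'_{f}(i) = T'_{s}(i) + k_{i}N^{2}w(i)T_{cp}$ satisfy $T'_{f}(i) \leq T_{f}(i)$, and setting $T'_{f} = \max_i T'_{f}(i) \leq T_{f}$ gives a feasible \emph{MMFT-LBP} solution whose objective is no worse. This shows $\mathrm{opt}(\text{MMFT-LBP}) \leq \mathrm{opt}(\text{MFT-LBP})$, and combining with the first direction yields equality of optima; moreover, the construction turns any optimal \emph{MFT-LBP} solution into an optimal \emph{MMFT-LBP} solution.

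The step I expect to require the most care is verifying that the pull-back is well-defined, i.e.\ that the transmission structure induced by $\tau$ is acyclic so the induction on topological order makes sense. Once acyclicity is in hand, the intuition is simply that ``holding data before processing can never help minimize the makespan,'' which is exactly what the authors remark informally; I would just make this precise by the inductive comparison $T'_{s}(i) \leq T_{s}(i)$ above. Everything else is bookkeeping, since none of the other constraints (\ref{src part})--(\ref{normalization}) involve the start times.
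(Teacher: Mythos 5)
Your proposal is correct and follows essentially the same route as the paper: the paper argues (informally) that replacing the max-constraint by the inequality only enlarges the feasible region, and that "holding data before processing never helps minimize the finishing time," which is exactly your two-sided inequality. The only substantive difference is one of rigor --- you make the second half precise via the explicit pull-back of start times along the topological order of the $\tau$-induced DAG (acyclic because data flows outward from the central source in each quadrant) and the induction $T'_{s}(i)\leq T_{s}(i)$, $T'_{f}(i)\leq T_{f}(i)$, whereas the paper leaves this step as an intuitive remark; your argument also quietly covers the auxiliary variable $T_{f}$ and the removal of the max from the objective, which the paper handles in a separate preliminary observation.
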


\noindent
The full problem statement is presented as follows.

\noindent
\textbf{MFT-LBP}

\noindent
\textbf{Variables:} $\{k_{i}\}$, $\{T_{s}(i)\}$, $\{T_{f}(i)\}$, $\{\phi(i,j)\}$, $T_{f}$

\noindent
\textbf{Objective:}

\begin{equation}
Minimize: ~T_{f}
\label{MFT-obj}
\end{equation}
\textbf{Constraint:}
\begin{equation}
T_{s}(i) = 0, ~\forall i \in S
\label{MFT-src start}
\end{equation}
\begin{equation}
T_{s}(i) \geq \tau(j,i)\big{[}T_{s}(j) + \phi(j,i)\cdot z(j,i)T_{cm}\big{]}, ~\forall i \notin S
\label{MFT-nonsrc start}
\end{equation}
\begin{equation}
T_{f}(i) = T_{s}(i) + k_{i}N^{2}w(i)T_{cp}, ~\forall i \in G(V,E)
\label{MFT-finish time}
\end{equation}
\begin{equation}
2N^{2} - \sum_{j \in G(V,E)} \phi(i, j) = 0, ~\forall i\in S
\label{MFT-src part}
\end{equation}
\begin{equation}
\sum_{j \in G(V,E)} \phi(j,i) - \sum_{j' \in G(V,E)} \phi(i,j') = 2k_{i}N, ~\forall i\notin S
\label{MFT-nonsrc part}
\end{equation}
\begin{equation}
\phi(i,j) \geq 0, ~\forall ~\tau(i,j) = 1
\label{MFT-phi geq 0}
\end{equation}
\begin{equation}
\phi(i,j) = 0, ~\forall ~\tau(i,j) = 0
\label{MFT-phi eq 0}
\end{equation}
\begin{equation}
k_{i} \in \mathbb Z_+, ~\forall i\in G(V,E)
\label{MFT-k geq 0}
\end{equation}
\begin{equation}
k_{i}=0, \forall i \in S
\label{MFT-src share}
\end{equation}
\begin{equation}
2k_{i}N + N^{2} \leq D_{i}, ~\forall i \notin S
\label{MFT-storage condition}
\end{equation}
\begin{equation}
\sum_{i=1}^{p} k_{i} = N
\label{MFT-normalization}
\end{equation}
\begin{equation}
T_{f} \geq T_{f}(i), \forall i \in G(V,E)
\label{MFT-tf}
\end{equation}

\noindent
\textbf{Remarks:}
As constraint (\ref{MFT-k geq 0}) shows, $\{k_{i}\}$ are positive integers, which make \emph{MFT-LBP} problem a \emph{Mixed Integer Non-linear Programming} problem. To solve it, we propose an algorithm called \emph{Phased Minimization of Finish Time with Layer Based Partition}(PMFT-LBP).

\subsection{PMFT-LBP}
\noindent
\textbf{PMFT-LBP:}
The \emph{PMFT-LBP} algorithm contains the following three phases. In Phase I, it relaxes integers $\{k_{i}\}$ to real numbers and solves the relaxed linear programming problem. In Phase II, based on the optimal real number solution obtained in Phase I, \emph{PMFT-LBP} determines a feasible integer solution for the original \emph{PMFT-LBP} problem, which is close to the optimal real number solution. In Phase III, starting from the feasible integer solution obtained in Phase II, \emph{PMFT-LBP} conducts a ``neighbor search" and seeks for local optimal feasible integer solution.

\noindent
\textbf{Phase I.} In this phase, the \emph{PMFT-LBP} algorithm relaxes condition (\ref{MFT-k geq 0}) to be a positive real number
\begin{equation}
k_{i} \geq 0, ~\forall i \in G(V,E)
\end{equation}
and solves a relaxed version of the \emph{MFT-LBP} problem, called \emph{MFT-LBP-relax}.

With this relaxation, all constraints are either linear equality or linear inequality, forming a convex polygon feasible region. Its objective function is also linear defined on this polyhedron. Hence, \emph{MFT-LBP} is a LP problem and can be solved to get the optimal real number solution $Q^{*} = \{\{k_{i}\}$, $\{T_{s}(i)\}$, $\{T_{f}(i)\}$, $\{\phi(i,j)\}$, $T_{f}\}$.

\noindent
\textbf{Phase II.} In this phase, \emph{PMFT-LBP} calls an algorithm called \emph{finds an integer feasible solution(FIFS)} based on the optimal real number solution $Q^{*}$ obtained from phase I. we firstly round off each $k_{i}$ to its closest integer. The intuition here is that if the real number optimal solution assigns the larger portion of a column to one processor, then assigning that processor with the entire column will have a higher chance to get ``closer" to optimality. It is vice versa that if the processor is assigned with the smaller portion of a column, we shall remove its share of this column.

\begin{algorithm}[!t]
\caption{PMFT-LBP}
\label{PMFT-LBP}
\begin{algorithmic}[1]
\Function{PMFT-LBP()}{}
\BState \emph{Phase I:}
\State solve \emph{MFT-LBP-relax}, get optimal real number solution $\{\{k_{i}\}$, $\{T_{f}(i)\}$, $\{T_{s}(i)\}$, $\{\phi(i,j)\}$, $T_{f}\}$.

\BState \emph{Phase II:}
\State call \emph{FIFS} to get $\{k_{i}'\}$, a feasible integer schedule.

\BState \emph{Phase III:}
\State use current schedule as start point: $p_{cur} \Leftarrow \{k_{i}'\}$
\State re-solve LP for $p_{cur}$.
\While{true}
\State choose node $a$ with $T_{f}'(a) = \max\{T_{f}'(i)\}$
\State $k_{a}'' \leftarrow k_{a}'-1$
\State choose node $b$ with $T_{f}'(b) = \min\{T_{f}'(i)\}$
\State $k_{b}'' \leftarrow k_{b}'+1$
\State construct $\{k_{i}''\}$ using $k_{a}'', k_{b}''$ replacing $k_{a}', k_{b}'$:
\State $\{k_{i}''\} \leftarrow \{k_{1}', k_{2}', ... k_{a}'', ..., k_{b}'', ...k_{p}'\}$
\State get neighbor: $p_{nb} \Leftarrow \{k_{i}''\}$
\State re-solve LP for $p_{nb}$.
\If {$T_{f}'\langle p_{cur}\rangle  < T_{f}'\langle p_{nb}\rangle$} break
\Else ~{$p_{cur} \leftarrow p_{nb}$}
\EndIf
\EndWhile
\State
\Return optimal schedule $p_{cur}$.
\EndFunction
\end{algorithmic}
\end{algorithm}

However, the rounding off process alone may not guarantee a feasible integer solution, because it may result in the sum of all $k_{i}$ fail to equal the multiplier matrix's side length $N$, constraint (\ref{MFT-normalization}). To resolve this problem, we conduct a subtle adjustment. For cases in which the sum is greater than $N$, meaning that there exists duplicate assignments, we shall reduce work load from some processors. Since the processor with the longest finishing time is the bottleneck affecting the overall finishing time, it has the highest priority to reduce its share of loads. On the contrary, for cases in which the sum is less than $N$, meaning that some rows/columns haven't been assigned to any processor, processor currently with the shortest running time has the highest priority to take up the responsibility.

We don't make the adjustment to one processor all at once. Instead, we conduct the adjustment iteratively. Every iteration we only adjust one row/column, then we update each processor's $\{T_{s}(i)\}$, $\{T_{f}(i)\}$, $\{\phi(i,j)\}$ to determine the processor to conduct adjustment on for the next round of iteration. The processor with the longest processing time currently will be the one to remove a row/column from in the next round of iteration, and the processor with the shortest processing time will be the one to take the extra load in the next iteration.

When the sum of all $k_{i}$ equals $N$, we finally find an integer feasible solution $\{k_{i}'\}$, $\{T_{s}'(i)\}$, $\{T_{f}'(i)\}$, $\{\phi'(i,j)\}$.

\begin{algorithm}[!t]
\caption{FIFS Algorithm}
\label{FIFS Algorithm}
\begin{algorithmic}[1]
\Function{FIFS ($\{k_{i}\}$, $\{T_{s}(i)\}$, $\{T_{f}(i)\}$, $\{\phi(i,j)\}$, $T_{f}$)}{}
\For {each $k_{i} \in \{k_{i}\}$}
\State $k_{i}' \leftarrow round(k_{i})$
\EndFor
\State $Sum \leftarrow \sum_{i}\{k_{i}'\}$
\While{$Sum \neq N$}
\State use $\{k_{i}'\}$ as known variables, re-solve \emph{MFT-LBP} problem, get updated $\{T_{s}'(i)\}$, $\{T_{f}'(i)\}$, $\{\phi'(i,j)\}$.
\If {$Sum > N$}
\State choose node $j$ with $T_{f}'(j) = \max\{T_{f}'(i)\}$
\State $k_{j}' \leftarrow k_{j}'-1$
\State $Sum \leftarrow Sum-1$
\EndIf
\If {$Sum < N$}
\State choose node $j$ with $T_{f}'(j) = \min\{T_{f}'(i)\}$
\State $k_{j}' \leftarrow k_{j}'+1$
\State $Sum \leftarrow Sum+1$
\EndIf
\EndWhile
\State
\Return integer schedule $\{k_{i}'\}$.
\EndFunction
\end{algorithmic}
\end{algorithm}

\noindent
\textbf{Phase III.} In this phase, \emph{PMFT-LBP} conducts a so-called ``Neighbor Search" process to optimize the feasible solution obtained in phase II. The reason why this feasible solution still need optimization is that the ``Rounding Off" procedure in phase II might bring about bias and take our feasible solution ``away" from optimal. Therefore, a further optimization is necessary.

The ``Neighbor Search" runs in iterations. For the first iteration, it starts from the feasible integer schedule $\{k_{i}'\} = \{k_{1}', k_{2}', ... k_{a}', ..., k_{b}', ...k_{p}'\}$ obtained in phase II, and compares the current overall finishing time with its neighbors'. $\{k_{i}''\} = \{k_{1}'', k_{2}'', ... k_{a}'', ..., k_{b}'', ...k_{p}''\}$ is defined as one of $\{k_{i}'\}$'s neighbors if all $k_{i}'' = k_{i}'$ except for only two dimensions $k_{a}''$ and $k_{b}''$, where $k_{a}'' = k_{a}' - 1$ and $k_{b}'' = k_{b}' + 1$, such that $\sum_{i=1}^{p}k_{i}'' = \sum_{i=1}^{p}k_{i}' = N$. If one neighbor $\{k_{i}''\}$ offers shorter overall finishing time than $\{k_{i}\}'$ and the rest of its neighbors, we change load schedule from $\{k_{i}'\}$ to $\{k_{i}''\}$. Then, ``Neighbor Search" will use $\{k_{i}''\}$ as a new start point and begins a new iteration. If at some stage, no further optimization can be made towards minimizing overall finishing time, which means the schedule at that stage $\{k_{i}^{*}\}$ is a local minimal point that has the shortest overall finishing time among all of its neighbors, iteration stops and we say we find our optimal schedule.

\begin{algorithm}[!t]
\caption{MFT-LBP-heuristic}
\label{MFT-LBP-heuristic}
\begin{algorithmic}[1]
\Function{MFT-LBP-heuristic ()}{}
\State solve \emph{MFT-LBP-relax}, get optimal real number solution $\{k_{i}\}$, $\{T_{f}(i)\}$.
\State $\{k_{i}'\} \leftarrow round(\{k_{i}\})$
\State use $\{k_{i}'\}$ as known variables, re-solve \emph{MFT-LBP} problem, get updated $\{T_{f}'(i)\}$.
\State $Sum \leftarrow \sum_{i}\{k_{i}'\}$
\State $diff \leftarrow Sum-N$
\If {$diff < 0$}
\State sort $\{T_{f}'(i)\}$ in ascending order
\State corresponding $k_{i} \leftarrow k_{i}+1$ till $diff = 0$.
\Else ~sort $\{T_{f}'(i)\}$ in descending order
\State corresponding $k_{i} \leftarrow k_{i}-1$ till $diff = 0$.
\EndIf
\EndFunction
\end{algorithmic}
\end{algorithm}

\subsection{MFT-LBP-heuristic}

Considering the high time complexity of \emph{PMFT-LBP} algorithm because so many LP-based updates are involved, we propose a heuristic called \emph{MFT-LBP-heuristic}. This heuristic calculates a good feasible integer solution that is close to the optimal solution with a time complexity which is substantially reduced from \emph{PMFT-LBP}. The basic idea is that whether one single row/column should be assigned to one processor or another simply doesn't bring about much improvement of the result. However, simplifying these steps saves time complexity substantially.

Based on this idea, \emph{MFT-LBP-heuristic} only keeps phase I the same with \emph{PMFT-LBP} algorithm.
In phase II, after obtaining the optimal real-number schedule $\{k_{i}\}$, the heuristic rounds off each $k_{i}$ to get $\{k_{i}'\}$ slightly differently. The difference is, if the sum of $\{k_{i}'\}$ doesn't equal $N$, the heuristic sorts all processors in ascending order of their finishing time $\{T_{f}(i)'\}$, which forms an array $Arr[p]$. If the sum of $\{k_{i}'\}$ is less than $N$, then from the first element in $Arr[p]$, the heuristic keeps adding $1$ to each processor's $k_{i}'$ until $\sum_{i}\{k_{i}'\} = N$. Otherwise if the sum of $\{k_{i}'\}$ is greater than $N$, the heuristic starts from the last element of $Arr[p]$ and minus $1$ from each processor's $k_{i}'$ one by one until $\sum_{i}\{k_{i}'\} = N$. The adjusting process continues in a circular manner so if it reaches one end of $Arr[p]$, it jumps to the the other end of the array and continue the process again.

In phase III, \emph{PMFT-LBP} searches each neighbor of current schedule $\{k_{i}'\}$. The time complexity of searching process is $O(p^2)$, where $p$ is the number of processors.
Once $p$ is big, the scalability of the algorithm is poor. In order to speed up the local search process, we apply the concept of `gradient descent' here. At each iteration, we only look at the neighbor $\{k_{i}''\}$ that has the highest chance to decrease the overall finishing time from current schedule $\{k_{i}'\}$. We know that $\{k_{i}''\}$  differentiate from $\{k_{i}'\}$ by just two dimensions $k_{a}'' = k_{a}' - 1$ and $k_{b}'' = k_{b}' + 1$. If $k_{a}'$ is the schedule of the processor that currently takes the longest processing time, while $k_{b}'$ is the schedule from the processor that takes the shortest, $\{k_{i}''\}$ then stands the highest the chance to be the neighbor that has shortest overall finishing time. We compare $\{k_{i}''\}$'s $T_{f}$ with $\{k_{i}'\}$'s. If $\{k_{i}''\}$'s is shorter, we use $\{k_{i}''\}$ in the next iteration. Otherwise, since even $\{k_{i}''\}$ cannot further decrease the overall finishing time, the other neighbors probably cannot either. Therefore we take $\{k_{i}'\}$ as the optimal schedule we look for.

This heuristic only solves LP problems twice and reduces time complexity of the iterative LP-based update process in \emph{PMFT-LBP}. The result of this may sacrifice the overall finishing time of the mesh network a little bit, but reduce the algorithm's time complexity substantially by reducing a lot of time-consuming LP iterative updating processes.

As will be seen in next section, the performance of our heuristic is extremely close, and even in cases equal to \emph{PMFT-LBP} both in terms of communication volume and finish time.

\section{Performance Evaluation}\label{performance evaluation}
\subsection{Performance Evaluation of Star Network}

In this subsection, we study the performance of our \emph{layer based partition} scheme in a star network, which is an instance of \emph{single neighbor network.} In each iteration, we randomly generate two $N*N$ matrices, and a heterogeneous star network of processors to conduct the matrix multiplication.

\noindent
\textbf{Star Network.} The star network contains one source and multiple children connected to the source. As mentioned previously, we assume the source node only transmits load, but does not take part in computing. In our simulation, we use a star network containing 16 children, where each child's unit processing time $wTcp$ is uniformly distributed in the range of (0.0005, 0.0008), and each link's unit transmission time $zTcm$ is uniformly distributed in the range of (0.0002, 0.0005).

\noindent
\textbf{Communication and Processing mode.} In order to better compare our algorithm with some other related algorithms, we use PCCS mode as the communication and processing mode here, meaning the source can communicate with each processor in a parallel manner whereas communication cannot overlap with computation.

\noindent
\textbf{Matrices.} The side length of the matrices in our simulation goes from 100 to 1000. When analyzing the performance of each algorithm over matrix size, each data point is an average of 10 independent experiments over 10 independently different star network.

\subsubsection{Evaluation Metrics}

\noindent
\textbf{Total Communication Volume.} Total communication volume is defined as the sum of data volume transmitted on each link.

\noindent
\textbf{Task Finishing Time.} The task finishing time in star network is defined as the time period from the source starts to send data till the last processor finishes working.

\subsubsection{Comparison Algorithms}
We compare our \emph{layer based partition} algorithm to a couple of typical \emph{rectangular partition} algorithms.

\noindent
\textbf{Even-Col.} Even-Col is a naive \emph{rectangular partition} algorithm that simply partitions the matrix into equivalent columns.

\noindent
\textbf{PERI-SUM.} Beaumont {\em et al.} \cite{Beaumont2} deal with the geometric problem of partition the unit square into $p$ rectangles of given area $s_{1}$, $s_{2}$, $s_{3}$, etc, so as to minimize the the sum of the perimeters(PERI-SUM) of the $p$ rectangles, which is proportional to communication volume. Beaumont {\em et al.} further propose the communication lower bound, and introduce a $1.75$-approximation algorithm. We use it in our comparison.

\noindent
\textbf{Recursive.} Nagamochi {\em et al.} \cite{Nagamochi} introduce a recursive partitioning technique on the basis of \textbf{PERI-SUM}, and improve the approximation ratio from $1.75$ to $1.25$.

\noindent
\textbf{NRRP.} Beaumont {\em et al.} \cite{Beaumont3} combine the idea of non-\emph{rectangular partition}ing from DeFlumere\cite{DeFlumere2} and recursive partitioning algorithm proposed by Nagamochi\cite{Nagamochi}. The combination of these two ingredients lead to an improvement of the approximation ratio down to $\frac{2}{\sqrt{3}} \simeq 1.15$.

\noindent
\textbf{Lower Bound.} Ballard {\em et al.} \cite{Ballard} proposed the lower bound of communication volume in \emph{rectangular partition} to be $2\sum_{i=1}^{p}(\sqrt{s_{i}})$. We use this lower bound to compare with our LBP's communication volume.

\subsubsection{Evaluation Result on Star Network}
\textbf{Communication Volume with Increasing Matrix Size.} As one of our most significant contributions, the simulation displays overwhelming superiority of our \emph{layer based partition} algorithm over \emph{rectangular partition} algorithms in total communication volume. Figure \ref{fig:StarNetwork}(a) compares the total communication volume of each algorithm. While the total communication volume generally increases along with the expansion of matrix size $N$, LBP generates the smallest total communication volume. Specifically, when the matrix size reaches $1000$, the total communication volume of \emph{layer based partition} reduces $75\%$ from the lower bound of the \emph{rectangular partition}. Meanwhile, as stated in \cite{Beaumont3}, the lower bound of \emph{rectangular partition} is too optimistic to reach in actual practice, and the real best \emph{rectangular partition} result is obtained actually in NPPR\cite{Beaumont3}, Recursive\cite{Nagamochi} and PERI-SUM\cite{Beaumont2}. Genearally, LBP presents a total communication volume that reduces $78\%$ from NPPR, $79.7\%$ from Recursive, and $85.1\%$ from PERI-SUM, respectively. We observe that LBP keeps this ratio over \emph{rectangular partition} algorithms along with the increase of matrix side length. Moreover, when the star network gets larger, this ratio gain becomes even bigger. We attribute these to the advantage of \emph{layer based partition} over \emph{rectangular partition} in communication volume.

\noindent
\textbf{Finishing Time with Increasing Matrix Size.} Figure \ref{fig:StarNetwork}(b) shows the overall finishing time of each algorithm. We observe that while all algorithm's finishing time increase with matrix size, LBP, PERI-SUM, Recursive, NPPR present similar curves, with their overall finishing time much smaller than that of Even-Col. Specifically, when the matrix size is $1000$, the overall finishing time of those four algorithm is about $40\%$ smaller than that of Even-Col. One reason for this is that those four algorithms, LBP, PERI-SUM, Recursive, NPPR all achieve load balance when scheduling the load. No matter dividing the matrix into layers or rectangles, each of the four algorithms makes sure that each share of load is proportional to that processor's computing ability.

\begin{figure}[ht]
\centering
    \subfigure[Communication Volume]{
        \includegraphics[width=2.1in]{./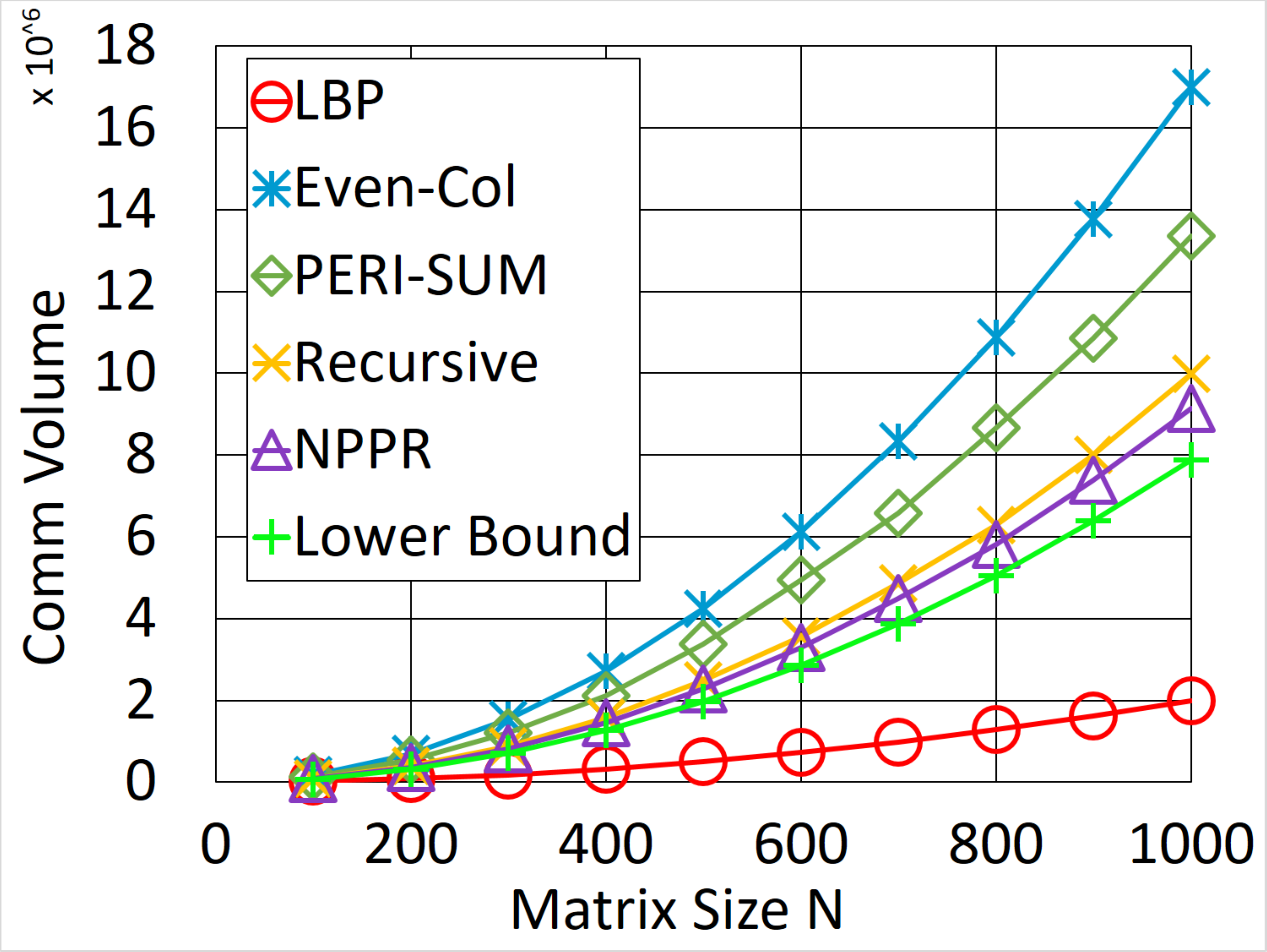}
        \label{fig:StarComvol}
    }
    \subfigure[Finishing Time]{
        \includegraphics[width=2.1in]{./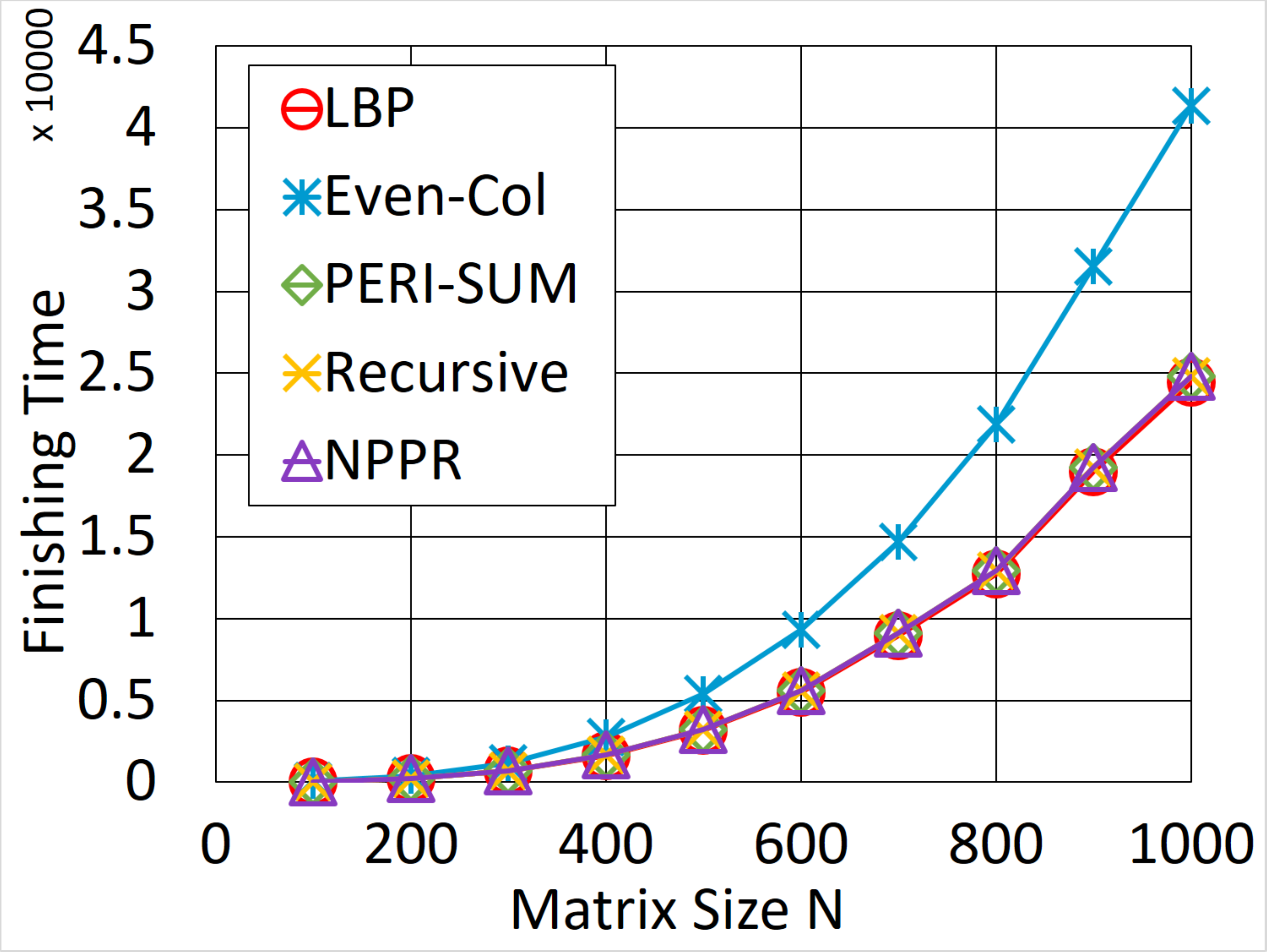}
        \label{fig:StarFinishingTime}
    }
\caption{Performance comparison on 16-node star network}
\label{fig:StarNetwork}
\end{figure}

\noindent
\textbf{Summary.} In this subsection we compare our \emph{layer based partition} algorithm with \emph{rectangular partition} algorithms in terms of total communication volume and task finishing time. The simulation results proves that our \emph{layer based partition} algorithm generates a total communication volume that is substantially reduced from the state-of-the-art \emph{rectangular partition} algorithms. In the meanwhile, LBP also reaches load balance and generates a total overall finishing time that is as low as the other algorithms.

\subsection{Performance Evaluation of Mesh}

\begin{figure*}
\begin{center}
  \subfigure[5*5 Mesh]{
  \includegraphics[width=2.1in]{./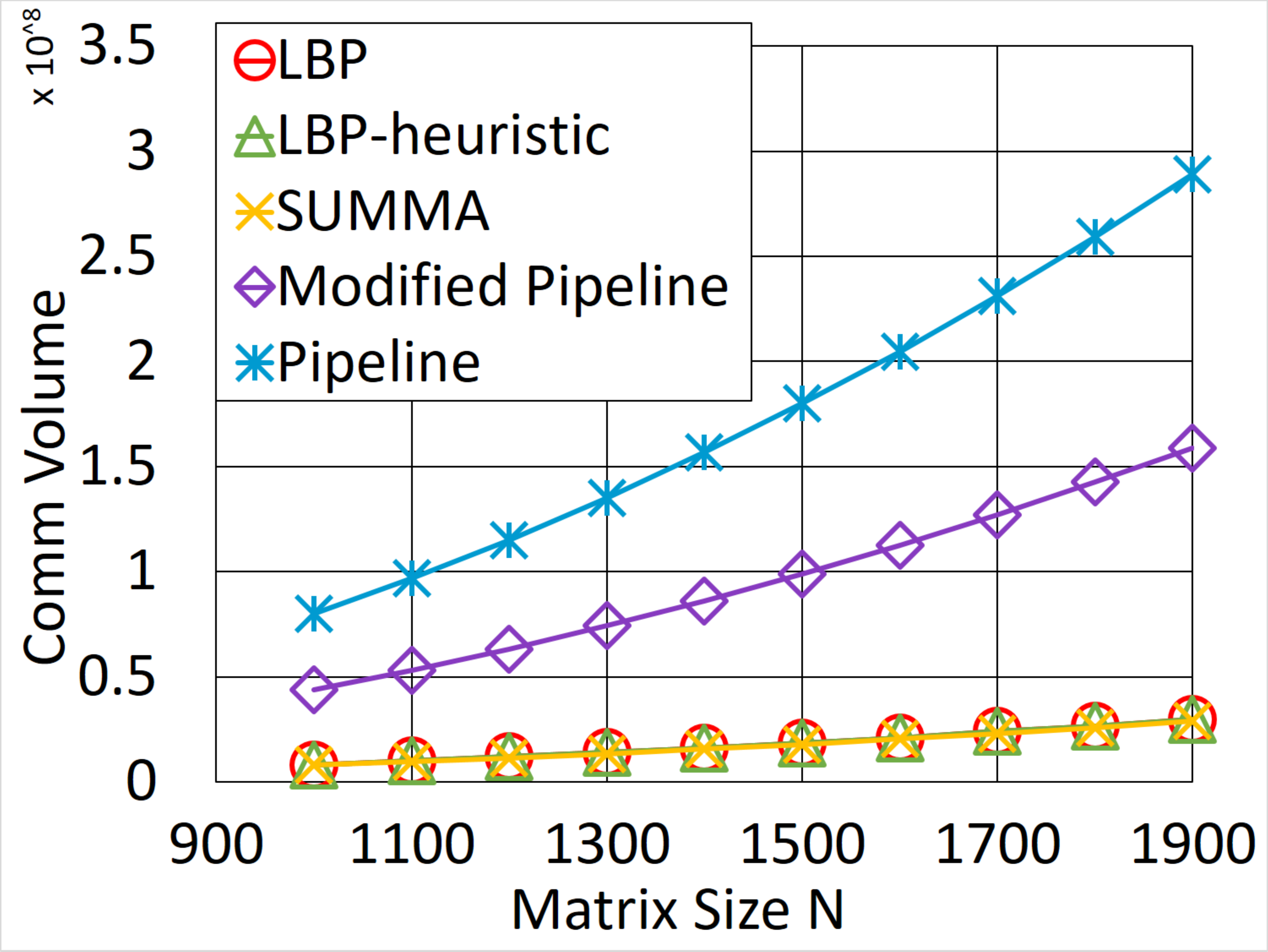}
  \label{fig:mesh5Comvol}
  }
  \subfigure[7*7 Mesh]{
  \includegraphics[width=2.1in]{./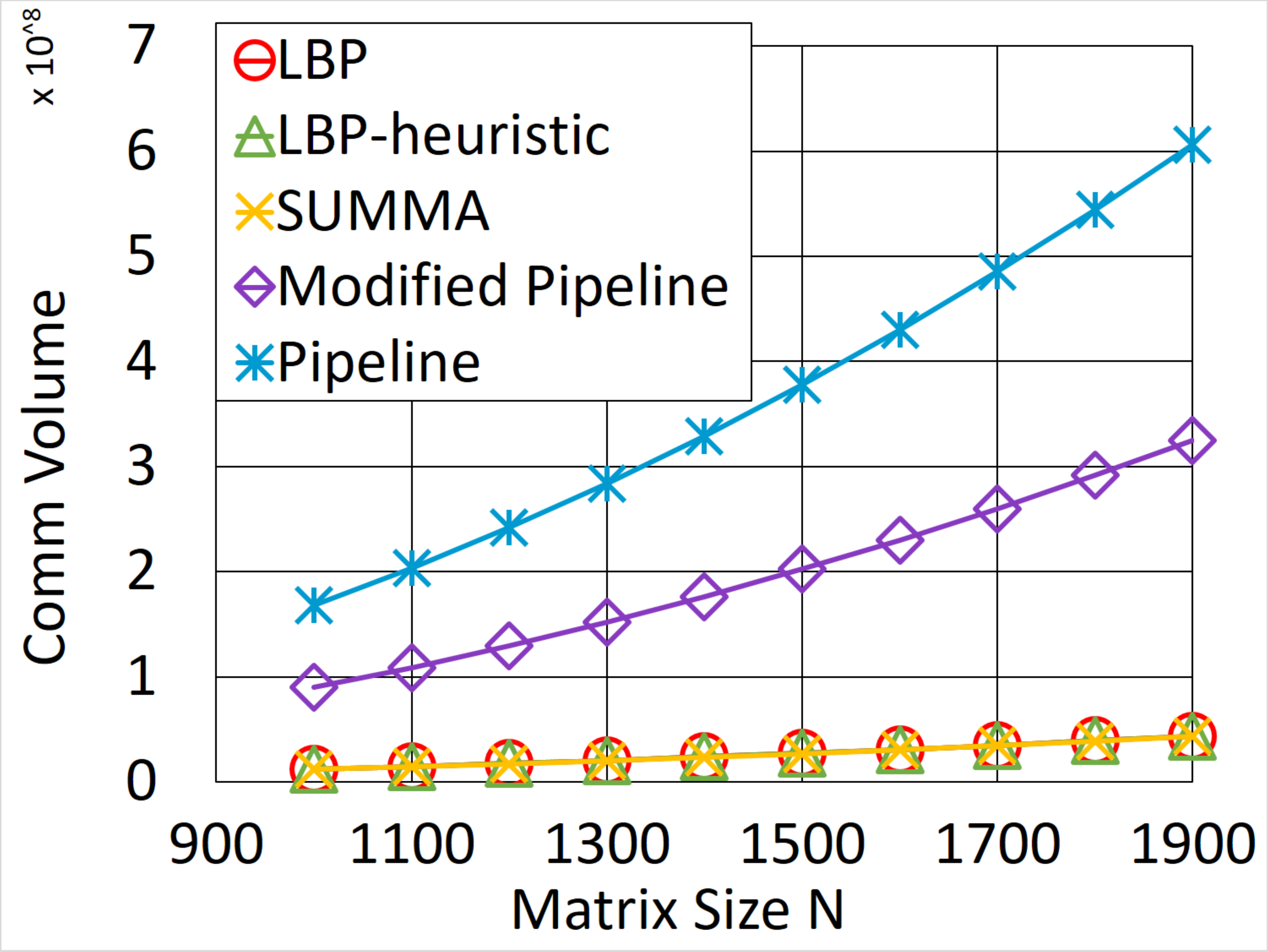}
  \label{fig:mesh7Comvol}
  }
  \subfigure[9*9 Mesh]{
  \includegraphics[width=2.1in]{./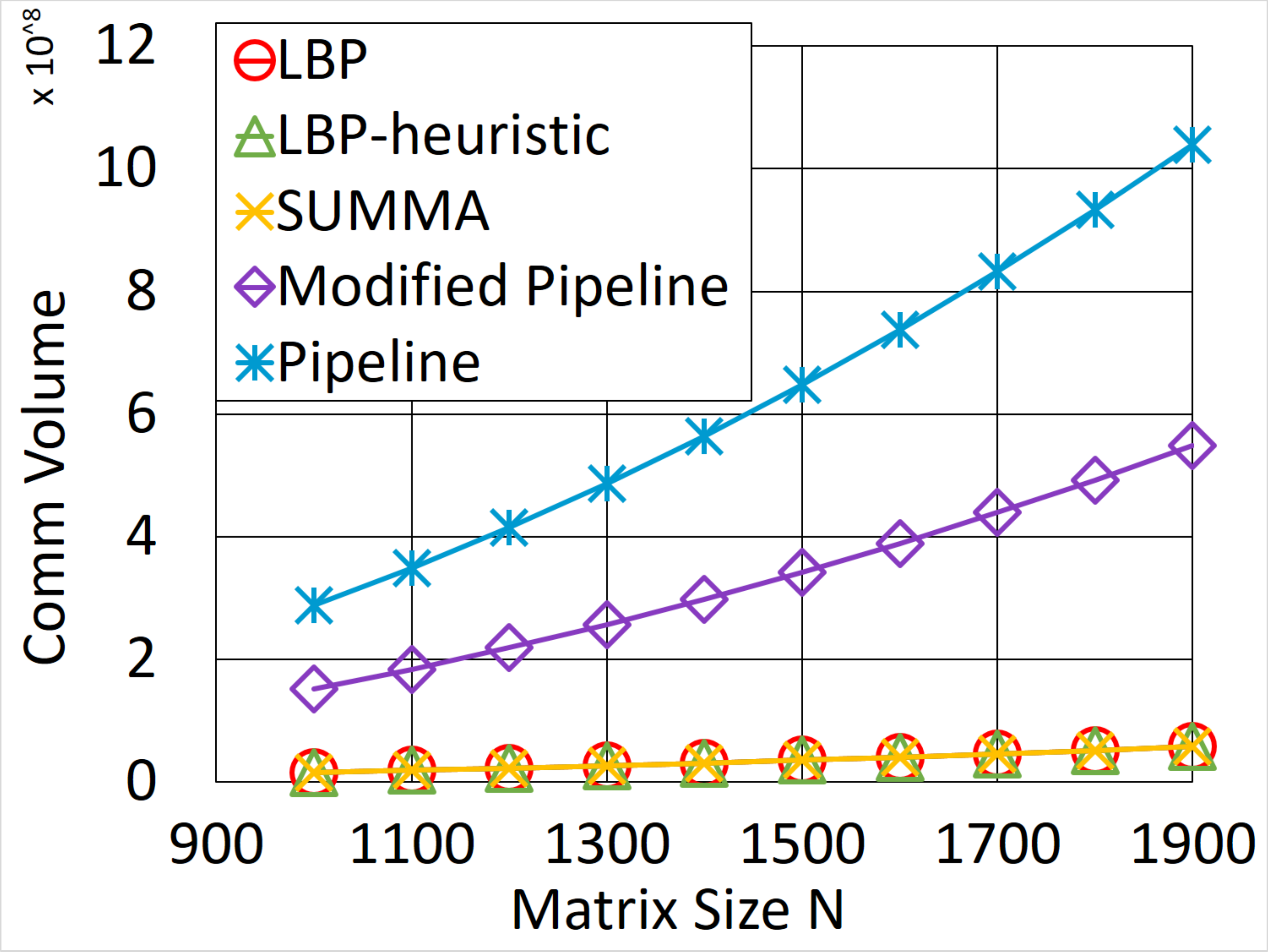}
  \label{fig:mesh9Comvol}
  }
  \vspace{-2mm}
  \caption{Communication volume comparison with increasing matrix size and network dimension.}
  \label{fig:meshComvol}
  \vspace{-5mm}
\end{center}
\end{figure*}

In this subsection, we study how the \emph{layer based partition} scheme performs in mesh networks. In each single run of the simulation, we randomly generate a heterogeneous mesh network, and two square matrices conducting multiplication. Then our LBP algorithm and the other comparing algorithms are called to schedule the matrix multiplication load on the given mesh network.

\noindent
\textbf{Mesh Network.} The mesh network is heterogeneous, with each link speed and processor speed independently generated. The unit processing time $wTcp$ of the processors is uniformly distributed in the range of $(0.0005, 0.0008)$, while the unit transmission time $zTcm$ of the links is uniformly distributed in the range of $(0.0002, 0.0005)$. In our simulation, we use use three square meshes, which are of dimension 5*5, 7*7 and 9*9. For model simplicity and without loss of generality, we focus on studying the case for one quadrant-the lower right one-in figure \ref{fig:meshFlow}, and the source node is located at the top left corner. The cases of the other three quadrants are similar. However, SUMMA is an exception, in which no single source exists, and each processor in the mesh takes one block of matrix data. So when evaluating the performance of SUMMA, we divide the matrix data into blocks and store it on corresponding processor.

\noindent
\textbf{Matrices.} The matrices we analyze are large scale dense matrices. In our simulation, we randomly generate matrices with their side length $N$ ranging from 1000 to 2000.

When analyzing the performance of each algorithm over matrix size, each data point in our simulation is an average of 10 independent experiments over 10 independently different mesh network.

\subsubsection{Evaluation Metrics}
\noindent
\textbf{Overall Communication Volume.} Overall communication volume is defined as the sum of data volume transmitted on each link. Compared to the total data volume coming out of the source, overall communication volume provides a more direct view of data running on each link.

\noindent
\textbf{Task Finishing Time.} The task finishing time in mesh network is defined as the time period from the source starts to send data till the last processor finishes working.

\noindent
\textbf{Total Number of Iterations to Solve LP.} Since we rely on simplex algorithm to solve LP in our LBP and LBP-heuristic, we evaluate both algorithms' efficiency in terms of average total number of iterations taken by simplex algorithm to solve LP.

\subsubsection{Comparison Algorithms}
\noindent
\textbf{SUMMA.} Geijn {\em et al.} \cite{Geijn} propose SUMMA, which is the most widely applied parallel matrix multiplication scheme on homogeneous grid. The algorithm allocates matrix blocks over the grids. In each step, the pivot column of blocks is communicated horizontally and the pivot row of blocks is communicated vertically. Each processor uses the pivot blocks it get to update its rectangle in each step. We apply this algorithm on our heterogeneous mesh network.

\noindent
\textbf{Pipeline.} The Pipeline algorithm is a classic scheduling method. Starting from the source, each node forwards the entire copy of data along the grids to each of its neighbor in the mesh network. Duplicate copies may be sent to one node, however, it only keep the first received one. Once that node finishes receiving its first copy, it starts processing the data while forwarding. The whole system acts like a pipeline with communication overlaps with computing.

\noindent
\textbf{Modified Pipeline.} Tan {\em et al.} \cite{Tan} propose an improved pipeline broadcast scheme for distributed matrix multiplication. The non-blocking pipeline scheme takes advantage of tuned chunk size to boost communication performance. We apply the idea to heterogeneous mesh network.

\subsubsection{Evaluation Result on Mesh}

\noindent
\textbf{Overall Communication Volume.} Figure \ref{fig:meshComvol} displays the overall communication volume when conducting matrix multiplication of two $N*N$ matrices in (a) 5*5 mesh, (b) 7*7 mesh, and (c) 9*9 mesh respectively. The simulation result reveals that while all algorithms's overall communication volume goes up as matrix size increases, SUMMA, LBP and LBP-heuristic generate almost the same smallest overall communication volume, which is $81\%$ smaller than that of Modified Pipeline and $90\%$ smaller than that of Pipeline. SUMMA is well-known to be communication-optimal on homogeneous mesh network. Though applying SUMMA on heterogeneous mesh may affect its overall finishing time due to the change of processor speed and link speed, its data transmission pattern won't be affected. So SUMMA is still communication-optimal on heterogeneous mesh. LBP and LBP-heuristic generates almost the same communication volume as SUMMA, consequently, implies that LBP and LBP-heuristic are at least close to communication optimal on heterogeneous mesh. Moreover, we observe that LBP, LBP-heuristic, SUMMA are close to each other as network dimension increases, but their difference ratio with the other algorithms are getting larger and larger.

\noindent
\textbf{Task Finishing Time.} Figure \ref{fig:meshFinishTime} shows the task finishing time of each algorithm on (a) 5*5 mesh, (b) 7*7 mesh, and (c) 9*9 mesh. Generally, LBP generates the smallest task finishing time than the rest of algorithms. LBP-heuristic gives a task finishing time that is slightly longer than that of LBP, which are $0.03\%$ more in 5*5 mesh, $0.08\%$ more in 7*7 mesh, and $0.18\%$ more in 9*9 mesh, respectively. This tiny difference can entirely be ignored. SUMMA, since it can no longer reach load balance with link speed and processor speed vary, its task finishing time are, respectively, $56.4\%$ more in 5*5 mesh, $52.9\%$ more in 7*7 mesh, and $46.7\%$ more in 9*9 mesh, than that of LBP. Moreover, Modified Pipeline are respectively $66.7\%$ more in 5*5 mesh, $87.2\%$ more in 7*7 mesh, and $121.1\%$ more in 9*9 mesh. Pipeline are respectively $73.4\%$ more in 5*5 mesh, $114\%$ more in 7*7 mesh, and $185\%$ more in 9*9 mesh. All in all, LBP and LBP-heuristic present the best performance in terms of task finishing time.

\begin{figure*}
\begin{center}
  \subfigure[5*5 Mesh]{
  \includegraphics[width=2.1in]{./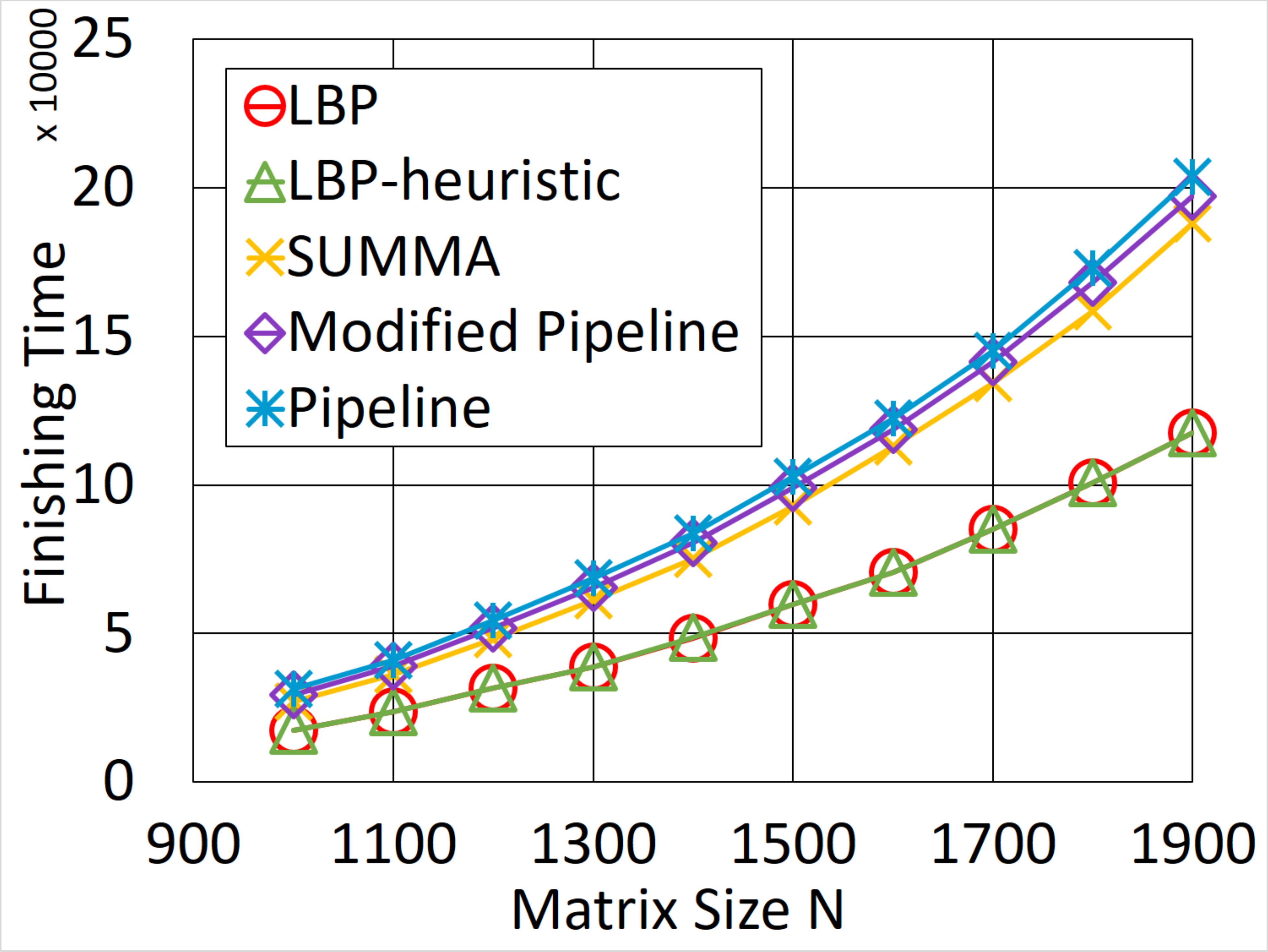}
  \label{fig:mesh5FinishTime}
  }
  \subfigure[7*7 Mesh]{
  \includegraphics[width=2.1in]{./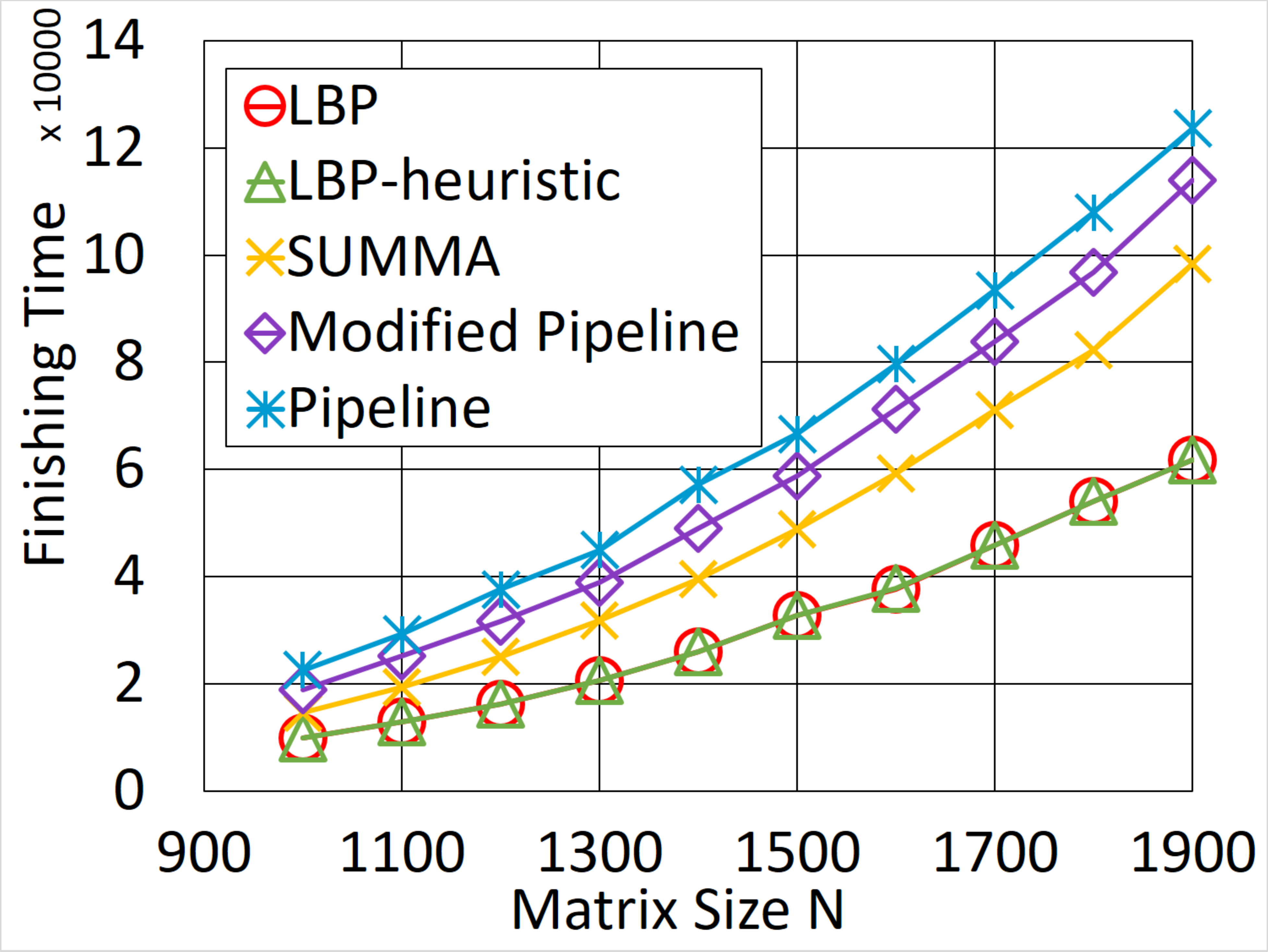}
  \label{fig:mesh7FinishTime}
  }
  \subfigure[9*9 Mesh]{
  \includegraphics[width=2.1in]{./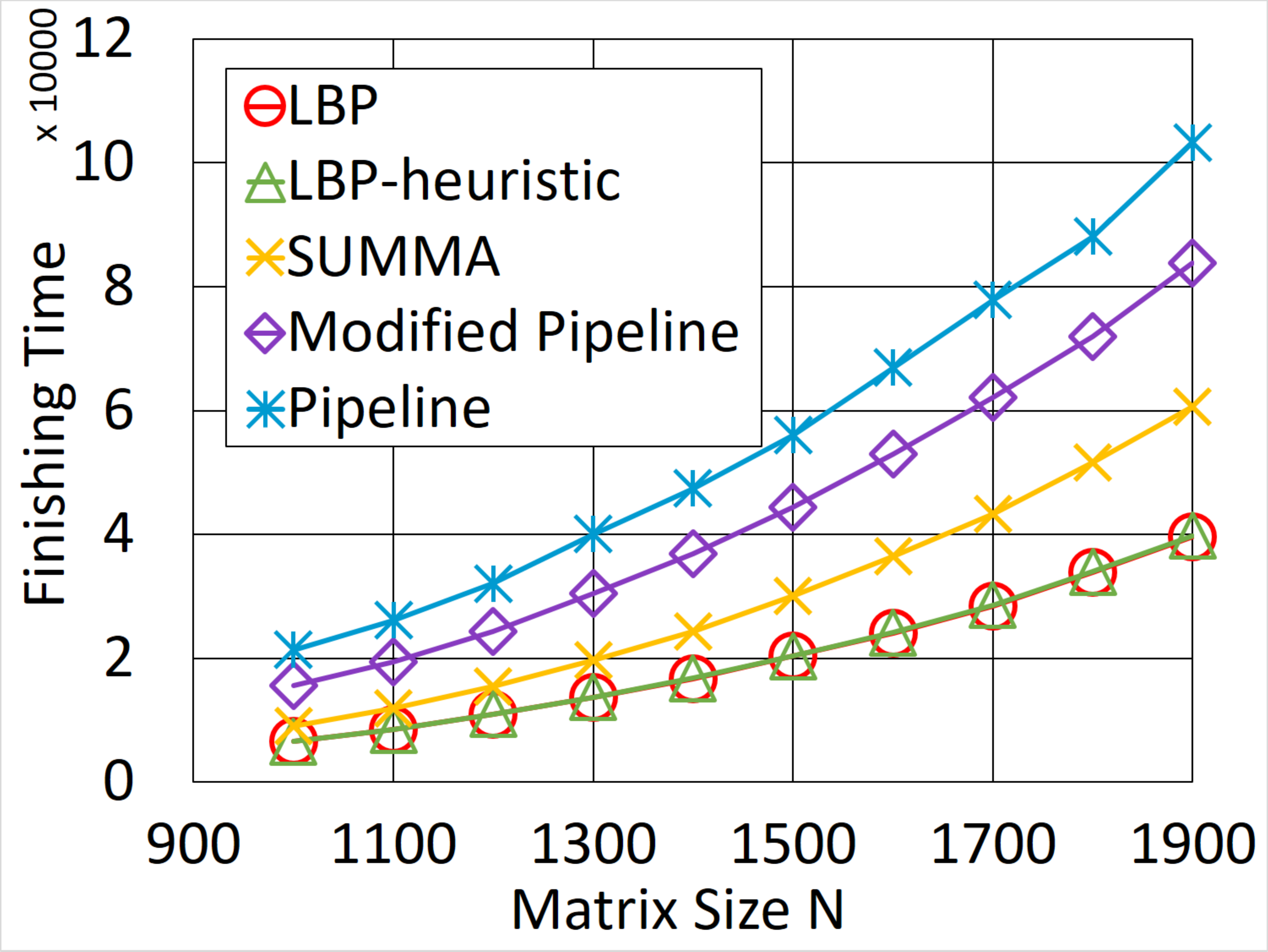}
  \label{fig:mesh9FinishTime}
  }
  \vspace{-2mm}
  \caption{Finishing time comparison with increasing matrix size and network dimension.}
  \label{fig:meshFinishTime}
  \vspace{-5mm}
\end{center}
\end{figure*}

\noindent
\textbf{Total Number of Iterations to Solve LP.} As mentioned previously, we use the simplex algorithm to solve LP in our LBP and LBP-heuristic algorithm. Each time solving the LP costs a certain number of iterations by the simplex algorithm. And according to our algorithm, we may re-solve LP a couple of times due to 1.find real number solution 2. find integer solution 3. local search, etc. Therefore, the total number of iterations to solve LP is a good indication of the efficiency of our algorithms. Figure \ref{fig:simplex} counts the average total number of iterations in solving LP by LBP and LBP-heuristic on 5*5 mesh, 7*7 mesh, 9*9 mesh. Each point is an average of 10 identical independent experiments. The solid lines represent LBP whereas the dashed lines represent LBP-heuristic. We have the following observations: 1. The solid lines vary dramatically due to the uncertain number of times to re-solve LP by LBP, while the dashed lines are comparatively stable. 2. The total number iterations show no correlation with respect to matrix size, a good evidence indicating that both algorithms are suitable for large scale matrix scheduling. 3. The total number of iterations does show positive correlation with respect to mesh size. 4. For the same mesh size, dashed lines are generally far below solid line, which indicates that LBP-heuristic generally requires much less total number of iterations to find a solution than LBP. In other words, LBP-heuristic is more efficient.

\begin{figure}[H]
\centering
  \includegraphics[width=2.1in]{./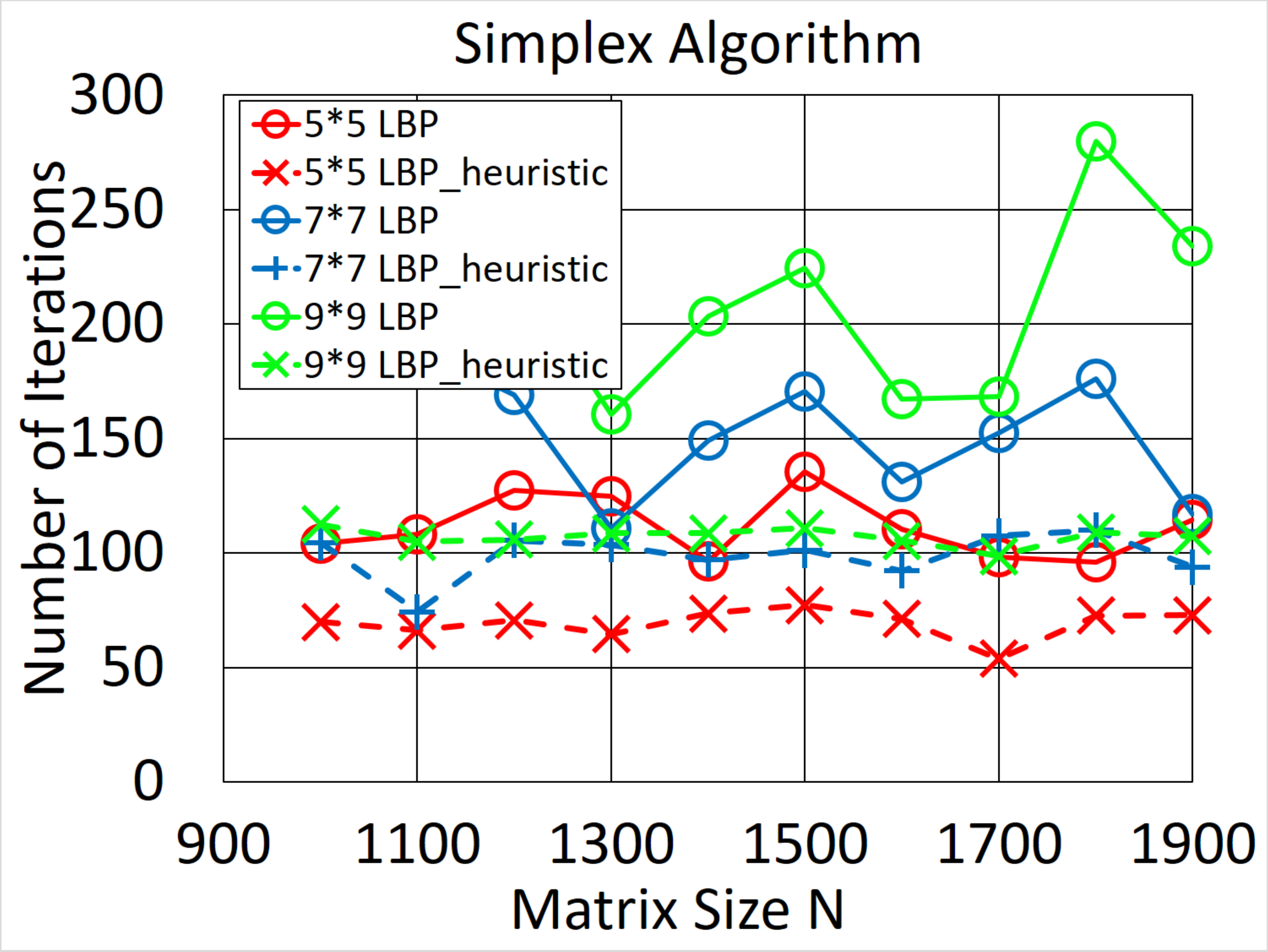}
  \caption{Number of iterations by the simplex method in solving LP.}
  \label{fig:simplex}
\end{figure}

\noindent
\textbf{Summary.} LBP-heuristic is significantly more efficient than LBP while maintaining almost equally good performance, which makes it widely applicable. Additionally, both algorithms outperform the other heterogeneous mesh scheduling algorithms substantially.

\section{Conclusion}
\label{conclusion}
In this paper, we focus on the problem of scheduling matrix multiplication on heterogeneous processor platforms. We first address two drawbacks of traditional \emph{rectangular partition}: 1. difficult to determine partition shapes. 2. communication volume is not optimal. Alternatively, we present a novel scheduling method: \emph{layer based partition}. We demonstrate that \emph{layer based partition} scheme is easy to find a partition, and generates a total communication volume that is optimal, smaller than the lower bound of \emph{rectangular partition}. In the following part, we study how to minimize task finishing time using LBP. In \emph{single-neighbor network}, we propose an equality based theory. In \emph{multi-neighbor network}, we formulate the problem as a Mix Integer Programming problem, which we provide a 3-Phase algorithm to solve. Considering the high time complexity, a heuristic algorithm is also proposed.

Simulation results show that \emph{layer based partition} outperforms the other comparing algorithms both in single and multiple neighbor networks. It generates a total communication volume that is substantially reduced from the state-of-the-art \emph{rectangular partition} algorithms, and maintain load balancing so that its task finishing time is minimized as well. Results also shows that in mesh network, LBP-heuristic achieves close, and even equal performance to \emph{layer based partition}, with its total number of iterations significantly reduced. Hence, we believe LBP-heuristic is very applicable in real world practice.



\begin{thebibliography}{}
%
%

\bibitem{GLZeng}
G.~L.~Zeng, \emph{Medical Image Reconstruction A Conceptual Tutorial}.\hskip 1em plus 0.5em
  minus 0.4em\relax Springer, 2009.

\bibitem{RGLyons}
R.~G.~Lyons, \emph{Understanding digital signal processing}.\hskip 1em plus 0.5em
  minus 0.4em\relax Prentice Hall, 2010.

\bibitem{Canon}
L.~Canon, ``A cellular computer to implement the kalman filter algorithm,'' in
  \emph{Ph.D. dissertation of Montana State University}, 1969.

\bibitem{Fox}
G.~Fox, S.~Otto, and A.~Hey, ``Matrix algorithms on a hypercube i: Matrix
  multiplication,'' \emph{Parallel Computing}, vol.~4, no.~1, pp. 17--31, 1987.

\bibitem{Geijn}
R.~van~de Geijn and J.~Watts, ``Summa: Scalable universal matrix multiplication
  algorithm,'' \emph{Concurrency - Practice and Experience}, vol.~9, no.~4, pp.
  255--274, 1994.

\bibitem{solomonik}
E.~Solomonik, J.~Demmel, ``Communication-optimal Parallel 2.5 D Matrix Multiplication and {LU} factorization Algorithms
,'' in \emph{Euro-Par 2011 Parallel Processing}, Springer, 2011, pp. 90--109.

\bibitem{Malysiak}
D.~Malysiak, and T.~Kopinski, ``A generic and adaptive approach for workload distribution in multi-tier cluster systems with an application to distributed matrix multiplication,'' in \emph{Computational Intelligence and Informatics (CINTI), 2015 16th IEEE International Symposium on}, IEEE, Nov 2015, pp. 255--266.

\bibitem{Yzelman}
A.~N.~Yzelman, and D.~Roose, ``High-Level Strategies for Parallel Shared-Memory Sparse Matrix-Vector Multiplication,'' \emph{IEEE Transactions on Parallel and Distribute Systems}, vol.~25, no.~1, pp. 116--125, 2014.

\bibitem{Koanantakool}
P.~Koanantakool, A.~Azad, A.~Bulu? D.~Morozov, S.~Oh, L.~Oliker, K.~Yelick, ``Communication-Avoiding Parallel Sparse-Dense Matrix-Matrix Multiplication,'' in \emph{Parallel and Distributed Processing Symposium, 2016 IEEE International}, IEEE, May 2016, pp. 842--853.

\bibitem{Samantray}
B. S. Samantray, D. Kanhar, ``
Implementation of dense matrix multiplication on 2D mesh,'' in \emph{High Performance Computing and Applications (ICHPCA), 2014 International Conference on}, IEEE, Dec 2014, pp. 1--5.

\bibitem{Mittal}
S. Mittal, and J. S. Vetter, ``A Survey of CPU-GPU Heterogeneous Computing Techniques,'' \emph{ACM Computing Surveys (CSUR)}, vol.~47, no.~4, 2015.

\bibitem{Hwu}
W. W. Hwu, ``Heterogeneous System Architecture: A New Compute Platform Infrastructure'' \emph{Morgan Kaufmann}, Dec
  2015.

\bibitem{AKalinov}
A.~Kalinov, ``Scalability analysis of matrix-matrix multiplication on
  heterogeneous clusters,'' in \emph{Parallel and Distributed Computing,
  International Symposium on}, ser. ISDPC '04.\hskip 1em plus 0.5em minus
  0.4em\relax IEEE Computer Society, Jul 2004, pp. 303--309.

\bibitem{Quintin}
J.~Quintin, K.~Hasanov, and A.~Lastovetsky, ``Hierarchical parallel matrix
  multiplication on large-scale distributed memory platforms,'' in \emph{IEEE
  International Conference on Parallel Processing}, ser. ICPP '13.\hskip 1em
  plus 0.5em minus 0.4em\relax IEEE, Jun 2013, pp. 754--762.


\bibitem{Ohtaki}
Y.~Ohtaki, D.~Takahashi, T.~Boku, and M.~Sato, ``Parallel implementation of
  strassen's matrix multiplication algorithm for heterogeneous clusters,'' in
  \emph{Parallel and Distributed Processing Symposium, 2004. Proceedings. 18th
  International}.\hskip 1em plus 0.5em minus 0.4em\relax IEEE, Apr 2004.


\bibitem{Alonso}
P.~Alonso, R.~Reddy, and A.~Lastovetsky, ``Experimental study of six different
  implementations of parallel matrix multiplication on heterogeneous
  computational clusters of multicore processors,'' in \emph{Parallel,
  Distributed and Network-Based Processing (PDP), 2010 18th Euromicro
  International Conference on}.\hskip 1em plus 0.5em minus 0.4em\relax IEEE,
  Feb 2010, pp. 263--270.



\bibitem{Malik}
T.~Malik, V.~Rychkov, A.~Lastovetsky, and J.~Quintin, ``Topology-aware
  optimization of communications for parallel matrix multiplication on
  hierarchical heterogeneous hpc platforms,'' in \emph{Parallel and Distributed
  Processing Symposium Workshops (IPDPSW), 2014 IEEE International}.\hskip 1em
  plus 0.5em minus 0.4em\relax IEEE, May 2014, pp. 39--47.


\bibitem{Zhong}
Z.~Zhong and A.~Lastovetsky, ``Data partitioning on multicore and multi-gpu
  platforms using functional performance models,'' \emph{Computers, IEEE
  Transactions on}, vol.~64, no.~9, pp. 2506--2518, Dec 2014.


\bibitem{Demmel}
J.~Demmel, D.~Eliahu, A.~Fox, S.~Kamil, B.~Lipshitz, O.~Schwartz, and
  O.~Spillinger, ``Communication-optimal parallel recursive rectangular matrix
  multiplication,'' in \emph{Proceedings of the 2013 IEEE 27th International
  Symposium on Parallel and Distributed}, ser. IPDPS '13.\hskip 1em plus 0.5em
  minus 0.4em\relax IEEE Computer Society, 2013, pp. 261--272.


\bibitem{Beaumont1}
O. Beaumont, L. E. Dubois, A. Guermouche, T. Lambert, ``
Comparison of Static and Runtime Resource Allocation Strategies for Matrix Multiplication,'' in \emph{Computer Architecture and High Performance Computing (SBAC-PAD), 2015 27th International Symposium on}, IEEE, Oct 2015, pp. 170--177.


\bibitem{Yang}
J. Yang, X. Meng, M. W. Mahoney, ``Implementing Randomized Matrix Algorithms in Parallel and Distributed Environments,'' \emph{Proceedings of the IEEE}, vol.~104, no.~1, pp. 58--92, Dec 2016.

\bibitem{Lastovetsky1}
A. Lastovetsky, ``
On Grid-based Matrix Partitioning for Heterogeneous Processors,'' in \emph{Parallel and Distributed Computing Sixth International Symposium on}, ser. ISPDC '07.\hskip 1em
  plus 0.5em minus 0.4em\relax IEEE, Jul 2007, pp. 51--58.

\bibitem{Lastovetsky2}
A. Lastovetsky, R. Reddy``
Two-dimensional Matrix Partitioning for Parallel Computing
on Heterogeneous Processors Based on their Functional Performance Models
,'' in \emph{Euro-Par 2009 ?Parallel Processing Workshops}, \hskip 1em
  plus 0.5em minus 0.4em\relax Springer, 2010, pp. 91--101.

\bibitem{Clarke}
D. Clarke, A. Lastovetsky, V. Rychkov``
Column-Based Matrix Partitioning for Parallel Matrix Multiplication on Heterogeneous Processors Based on Functional Performance Models
,'' in \emph{Euro-Par 2011 ?Parallel Processing Workshops}, \hskip 1em
  plus 0.5em minus 0.4em\relax Springer, 2012, pp. 450--459.

\bibitem{Ballard}
G.~Ballard, J.~Demmel, and A.~Gearhart, ``Communication bounds for
  heterogeneous architectures,'' in \emph{23rd ACM Symposium on Parallelism in
  Algorithms and Architectures}, ser. SPAA 2011.\hskip 1em plus 0.5em minus
  0.4em\relax ACM, Feb 2011.


\bibitem{Beaumont2}
O.~Beaumont, V.~Boudet, F.~Rastello, and Y.~Robert, ``Matrix multiplication on
  heterogeneous platforms,'' \emph{IEEE Transactions on Parallel and Distribute
  Systems}, vol.~12, no.~10, pp. 1033--1051, 2001.

\bibitem{DeFlumere1}
A.~DeFlumere and A.~Lastovetsky, ``Searching for the optimal data partitioning
  shape for parallel matrix matrix multiplication on 3 heterogeneous
  processors,'' in \emph{Parallel and Distributed Processing Symposium
  Workshops (IPDPSW), 2014 IEEE International}.\hskip 1em plus 0.5em minus
  0.4em\relax IEEE, May 2014, pp. 17--28.

\bibitem{DeFlumere2}
A.~DeFlumere, ``Partitioning for parallel
  matrix-matrix multiplication with heterogeneous processors: The optimal
  solution,'' in \emph{IEEE International Parallel and Distributed Processing
  Symposium Workshops and PhD Forum}.\hskip 1em plus 0.5em minus 0.4em\relax
  IEEE, May 2012, pp. 125--139.

\bibitem{Nagamochi}
H. Nagamochi and Y. Abe, ``An approximation algorithm for dissecting
a rectangle into rectangles with specified areas
,'' \emph{Discrete Applied Mathematics}, vol.~155, no.~4, pp. 523--537, 2007.

\bibitem{Beaumont3}
O. Beaumont, L. E. Dubois, A. Guermouche, T. Lambert, ``
A New Approximation Algorithm for Matrix Partitioning in Presence of Strongly Heterogeneous Processors,'' in \emph{Parallel and Distributed Processing Symposium, 2016 IEEE International}. \hskip 1em plus 0.5em minus 0.4em\relax
  IEEE, May 2016, pp. 474--483.




\bibitem{Fuenschuh}
A. F{\"u}enschuh, K. Junosza-Szaniawski, and Z. Lonc, ``Exact and approximation algorithms for a soft rectangle packing problem
,'' \emph{Optimization}, vol.~63, no.~11, pp. 1637--1663, 2014.

\bibitem{Korf}
R. Korf, ``Multi-Way Number Partitioning,'' \emph{Proceedings of the Twenty-First International Joint Conference on Artificial Intelligence (IJCAI-09)}, Jul 2009.

\bibitem{Bharadwaj}
V.~Bharadwaj, D.~Ghose, V.~Mani, and T.~Robertazzi, \emph{Scheduling Divisible
  Loads in Parallel and Distributed System}.\hskip 1em plus 0.5em minus
  0.4em\relax IEEE Computer Society, 1996.


\bibitem{Zhangzeming}
Z.~Zhang and T.~G. Robertazzi, ``Scheduling divisible loads in gaussian, mesh
  and torus network of processors,'' \emph{IEEE Transactions on Computers},
  vol.~64, no.~11, pp. 3249--3264, Jan 2015.

\bibitem{Tan}
L. Tan, ``Improving performance and energy efficiency of matrix multiplication via pipeline broadcast,'' in \emph{Cluster Computing (CLUSTER), 2013 IEEE International Conference on}, IEEE, Sep 2013, pp. 1--5.

\end{thebibliography}


\end{document}